
\documentclass{qam-l}

\usepackage{amsfonts,latexsym}
\usepackage{stmaryrd}
\usepackage{enumitem}
\usepackage{amsmath,amssymb}
\usepackage{xcolor}

\usepackage[normalem]{ulem}
\usepackage{cancel}

\newtheorem{theorem}{Theorem}[section]
\newtheorem{lemma}[theorem]{Lemma}

\newtheorem{corollary}[theorem]{Corollary}

\theoremstyle{definition}
\theoremstyle{definition}

\newtheorem{assumption}{Assumption}[section]
\newtheorem{problem}{Problem}[section]

\theoremstyle{remark}
\newtheorem{remark}{Remark}[section]

\numberwithin{equation}{section}

\newcommand{\dd}{\,\mathrm{d}}
\newcommand{\mathbi}[1]{{\boldsymbol{#1}}}

\newcommand{\etal}{\textit{et al.}\ }
\newcommand{\cf}{\textit{cf.\ }}
\newcommand{\eg}{\textit{e.g.\ }}

\newcommand{\mm}{{-}}
\newcommand{\pp}{{+}}
\newcommand{\br}[1]{{(#1)}}

\newcommand{\std}{{\mathrm{ss}}}

\newcommand{\frc}{{\mathrm{f}}}
\newcommand{\frcpm}{{\frc_\mm^\pp}}

\newcommand{\norm}[1]{{\big\Vert #1 \big\Vert}}

\newcommand{\innp}[1]{{\big( #1 \big)}}

\newcommand{\jmp}[1]{{\,\big[\,#1\,\big]_\mm^\pp }}

\newcommand{\DDt}[1]{\frac{\dd #1}{\dd t}}
\newcommand{\ddt} [1]{\dot {#1}}
\newcommand{\ddtt}[1]{\ddot{#1}}
\newcommand{\pdd}[2]{\frac{\partial #1}{\partial #2}}
\newcommand{\pddt}{\frac{\dd}{\dd t}}

\newcommand{\Dt}{{\delta t\,}}

\newcommand{\TT}{\parallel}
\newcommand{\nablaS}{\nabla^{^\Surf}\!}

\newcommand{\Domn}{{\Omega}}

\newcommand{\Surf}{{\Sigma}}

\newcommand{\fSurf}{\Surf_\mathrm{f}}

\newcommand{\itLa}{\mathit{\Lambda}}
\newcommand{\itLaINI}{\itLa^{\tsT^0}}

\newcommand{\tsC}{\mathbi{C}}
\newcommand{\tsLa}{\mathbi{\mathit{\Lambda}}}
\newcommand{\tsLaINI}{\tsLa^{\!\tsT^0\!}}
\newcommand{\tsGa}{\mathbi{\Gamma}}
\newcommand{\tsGaINI}{\tsGa^{\tsT^0}}

\newcommand{\tsT}{\mathbi{T}}

\newcommand{\tsTd}{\mathbi{T}_\delta}

\newcommand{\vef}{\mathbi{f}}
\newcommand{\veg}{\mathbi{g}}
\newcommand{\vex}{\mathbi{x}}

\newcommand{\veu}{\mathbi{u}}
\newcommand{\ven}{\mathbi{n}}
\newcommand{\vev}{\mathbi{v}}
\newcommand{\ves}{{{\scriptsize\mathbi{\mathcal{S}}}}}
\newcommand{\vew}{\mathbi{w}}

\newcommand{\veer}{\mathbi{\epsilon}}

\newcommand{\scer}{{\epsilon}}
\newcommand{\Veer}{\hat{\mathbi{\eta}}}
\newcommand{\Scer}{{\hat\eta}}

\newcommand{\vetau}{{\mathbi{\tau}}}

\newcommand{\vei}{\mathbi{i}}
\newcommand{\vej}{\mathbi{j}}

\newcommand{\veps}{\varepsilon}
\newcommand{\scalF}{F^\veps}
\newcommand{\bcalF}{\mathbi{F}^\veps}
\newcommand{\sPsi}{\Psi^\veps}
\newcommand{\bPsi}{\mathbi{D}^\veps}

\newcommand{\opF}{\mathcal{F}}
\newcommand{\opG}{\mathcal{G}}
\newcommand{\opS}{\mathcal{S}}
\newcommand{\Sym}{\mathfrak{S}}

\newcommand{\spH}{\mathcal{H}}

\newcommand{\Grav}{G}

\definecolor{Blue}{rgb}{0,0,0.7}
\definecolor{Red}{rgb}{0.7,0,0}

\newcommand{\Cblu}{\color{black}}

\begin{document}

\title
[Dynamic ruptures generating seismic waves
in a self-gravitating planet]
{
    Analysis of dynamic ruptures generating seismic waves in a
    self-gravitating planet: An iterative coupling scheme
    and well-posedness
}

\author{
    Maarten V. de Hoop 
}
\address{Simons Chair in
    Computational and Applied Mathematics and Earth Science,
    Rice University, 6100 Main Street,
    Houston, Texas 77005, USA}

\author{
    Kundan Kumar 
}
\address{Department
    of Mathematics,
    University of Bergen, Allegaten 41,
    Postboks 7803, 5020 Bergen, Norway}

\author{
    Ruichao Ye 
}
\address{Department
    of Earth, Environmental and Planetary Sciences, 
    Rice University, 6100 Main Street,
    Houston, Texas 77005, USA; now at
    ExxonMobil Upstream Research Company,
    22777 Springwoods Parkway, Spring, Texas 77389, USA}
\email{ruichao.ye@gmail.com}


\subjclass[2000]{Primary }
\date{}
\dedicatory{}

\begin{abstract}
We study the solution of the system of equations describing the
dynamical evolution of spontaneous ruptures generated in a prestressed
elastic-gravitational deforming body and governed by rate and state
friction laws. We propose an iterative coupling scheme based on a weak
formulation with nonlinear interior boundary conditions, both for
continuous time and with implicit discretization (backward Euler) in
time. We regularize the problem by introducing viscosity. This
guarantees the convergence of the scheme for solutions of the
regularized problems in both cases. We also make precise the
conditions on the relevant coefficients for convergence to hold.
\end{abstract}

\maketitle
\allowdisplaybreaks

\section{Introduction}

The study and mathematical formulation of seismic wave propagation and
scattering in a uniformly rotating and self-gravitating earth model
dates back to the works of Dahlen \cite{Dahlen1972,Dahlen1973} and
Woodhouse and Dahlen \cite{Woodhouse1978}. Valette \cite{Valette1987}
studied the proper weak formulation of the underlying system of
equations, and de Hoop, Holman and Pham \cite{deHoop2015} completed
the analysis of well-posedness through energy estimates. The
complications in this analysis arise essentially from the presence of
a fluid outer core, and Shi \etal \cite{Shi2018} showed its impact
upon the interior normal modes of the earth while solving a generalized
eigenvalue problem.  Here, we study a different complication, namely
the coupling of the system to rupture dynamics.

The energy budget of a ``kinematic'' rupture via a slip boundary
condition was studied by Dahlen \cite{Dahlen1977}, without a friction
law. However, in rupture dynamics friction laws play a critical
role. Theoretical models of earthquake rupturing based on rate and
state friction laws and their incorporation in the
elastic-gravitational system of equations describing seismic waves
have been studied in recent years \cite{Lozos2015, Harris2017,
  Thomas2017}. However, a rigorous mathematical, weak formulation and
an analysis of well-posedness have been lacking and are addressed here
while introducing an iterative coupling scheme.

The dependence of friction strength on slip rate and the evolving
contact properties of material, or so-called ``state'', has been
recognized in laboratory studies and formalized by Dieterich
\cite{Dieterich1979}, Ruina \cite{Ruina1981,Ruina1983}, Rice
\cite{Rice1983}, Rice and Ruina \cite{Rice1983a}, and many
others. Such studies were conducted on various rock types and fault
gouge layers, and over a wide range of slip rates and confining normal
stresses. The relation between the rate and state friction laws and
realistic rupture processes was discussed by Dunham \etal
\cite{Dunham2011}.

Originally developed in the laboratory, rate- and state-dependent
friction laws have been proven to be well-posed in one-dimensional
problems; these laws are known to fit rate-dependent experimental
results \cite{Dieterich1979,Ruina1983,Richardson1999,Rice2001}.
However, general existence or uniqueness of solutions for rate- and
state-dependent friction laws coupled to the elastic-gravitational
system of equations (in dimension three) describing oscillations of
the earth have not been studied so far. The challenges that arise in
proving these pertain to the high-order derivative terms arising from
the dependency of friction on the normal stress and the surface
divergence introduced by a dynamically slipping boundary. The
challenges remain in the coupling of slip-dependent friction laws,
even for the simplest case, that is, linear slip-weakening friction.
Analyses of well-posedness have been based on simplified scenarios
deviating from the general case in essential ways: Fixing the normal
stress to a reference value (the Tresca model \cite{Ionescu1996,
  Ionescu2003, Pipping2015, Pipping2017}), or characterizing the
normal stress with a power-relation of normal displacement (the normal
compliance model \cite{Martins1987, Klarbring1988,
  Ionescu2002}). Here, in the general case, we establish existence and
uniqueness by introducing viscosity, expressed as a Kelvin-Voigt
relaxation, with a small coefficient. For non-opening ruptures
following Lipschitz continuous rate- and state-dependent friction
laws, the further necessary conditions are natural.

At the same time, in recent years, vatious numerical algorithms have
been developed for computing solutions for rate- and state-dependent
friction laws coupled to the elastic-gravitational system of
equations, based on the above mentioned simplifications, seemingly
producing physically reasonable results \cite{Geubelle1995,
  Dalguer2007, Benjemaa2009, Pelties2012, Kozdon2013, Zhang2014,
  OReilly2015, Duru2016}. Some numerical studies do point out that
problems (like shock waves) can occur for long-time simulations, and
that introducing artificial viscosity is a natural way to obtain a
stable solution (\eg \cite{Day2005, Kaneko2008a, Aagaard2013}).
However, a mathematical framework addressing well-posedness on any
finite time interval through viscous regularization while avoiding
possibly unphysical simplifications to study coupled rupture dynamics
and seismic wave generation accounting for self gravitation, has
indeed been lacking so far and is presented here. The main technique
is iterative coupling the convergence of which we establish in concert
with the occurrence of two time scales. We suppress the uniform
rotation in our analysis, but including this is a simple task.

The friction law appears on some of the interior boundaries identified
as faults, and involves a nonlinear algebraic relation with the
evolution of a state variable that is represented by a time-dependent
nonlinear ordinary differential equation (ODE). Our approach is based
on considering an iterative coupling scheme derived from initially
decoupling the elastic-gravitational system of equations from the
state ODE supplemented with nonlinear frictional constraints. By
considering the equations satisfied by the difference of two
successive iterates, we obtain a contraction in natural norms for
these. The fixed point obtained as a result of this contraction solves
the original system of coupled equations. The artificial viscosity,
introduced in the elastic-gravitational system of equations, is used
in an essential manner to obtain the required estimates. Since the
proposed iterative scheme decouples two physically distinct problems,
a multi-rate scheme \cite{Girault2016} is a natural outcome. The
natural choice of numerical method is the discontinuous Galerkin (DG)
one, see earlier works by de la Puente \etal \cite{delaPuente2009},
Tago \etal \cite{Tago2012} and Pelties \etal \cite{Pelties2012}. In a
companion paper \cite{Ye2018ruptnum}, we develop a novel algorithm for
the multi-rate iterative coupling scheme proposed, here, using a nodal
DG method with penalty numerical flux enabling the general simulation
and studying of earthquakes.

The outline of this paper is as follows. In
Section~\ref{sec:assumptions}, we give the strong formulation for
particle motion, including self gravitation, and boundary conditions
expressing the coupling with a friction law, and the corresponding
weak formulation with necessary assumptions including the regularity
of model geometry and model parameters. The empirical assumptions of
friction laws are also discussed. In Section~\ref{sec:variational
  form}, we define the appropriate energy spaces, and then introduce
the variational form. In Section~\ref{sec:coupling}, we propose an
iterative coupling scheme and present a proof of contraction, with a
condition on the artificial viscosity. Based on the contracting
iterative scheme, we proof the existence of solutions for the coupled
problem in Section~\ref{sec:exist}. We discuss a backward Euler time
discretization in Section~\ref{sec:disc t}. The proof of contraction
is under certain conditions, which we precise in the theorems, for the
time step and the choice of viscosity coefficient.  Theorem~\ref{th1}
shows the contraction of iterative scheme in the continuous time case,
and Theorem~\ref{th2} in the discrete time case, with conditions on
the viscosity coefficient depending on model geometry and elastic
parameters, as we highlight in Remark \ref{rmk th2}. The contraction
result of Theorem~\ref{th1} allows us to deduce the existence of a
mixed solution to the coupled system as shown in Theorem
\ref{th:exist}.

\section{Mathematical model and assumptions}
\label{sec:assumptions}

We consider the problem in a bounded subdomain
$\overline\Domn\in\mathbb{R}^3$ that stands for the interior of the
solid earth (ignoring the fluid oceans and outer core), with a
continuum of linear elastic material that follows Hooke's law, except
at the rupture surface. We further assume that $\Domn$ is a Lipschitz
composite domain, which is defined as a disjoint union of open
subsets, $\Domn=\bigcup_{k=1}^{k_0} \Domn_k$, with interior boundaries
(supplemented with slip and non-slip conditions) given by
\[\Surf=\bigcup_{1\leq k<k'\leq k_0} 
    \partial\Domn_k\cap\partial\Domn_{k'} 
\setminus \partial\Domn,\]
which are two-dimensional Lipschitz continuous surfaces. 
We denote the rupture surface by $\Surf_\frc$, 
which is an open subset of $\Surf$.  
We have $\overline\Domn=\Domn\cup\Surf\cup\partial\Domn$.
The boundary of the interior surface, $\partial\Surf$, 
is a finite union of curves of measure 0 lying on the exterior boundary
$\partial \Domn$, 
where a traction free condition (\ref{eq:freesurf bc}) is applied.
We let $\ven\colon \partial\Domn_k \to \mathbb{R}^3$ be
the unit normal vector of interior and exterior boundaries,
defined almost everywhere on $\Surf \cup \partial \Domn$. 
It satisfies $\ven\in L^\infty(\Surf\cup\partial\Domn)^3$,
and labels the two sides of $\Surf$ by ``$\mm$'' and ``$\pp$''.
The jump operator $\smash{\jmp{\,\centerdot\,}}$ can be defined
for any bounded Lipschitz continuous function $f$ as
\begin{equation}
    \jmp{f}:=f^\pp-f^\mm=
    f^{\overline{\Domn_k}}(x)-f^{\overline{\Domn_{k'}}}(x),\quad
    \mbox{ for }x\in\partial\Domn_k\cap\partial\Domn_{k'},
    \label{eq:def pm}
\end{equation}
where $\Domn_{k}$ corresponds to the region of the ``$\pp$'' side and
$\Domn_{k'}$ to the region on the ``$\mm$'' side. 

\subsection{The basic equations in the strong form}
\label{sec:ass1}

We follow Brazda \etal \cite{Brazda2017} in introducing the equation
of motion in a prestressed earth while ignoring the rotation of the
earth. The gravitational potential $\phi^0$ satisfies Poisson's
equation
\begin{equation}
    \Delta\phi^0=4\pi \Grav\rho^0,
    \label{eq:grav pot}
\end{equation}
with $\rho^0$ the initial density distribution of the earth,
and $\Grav$ Newton's universal constant of gravitation.
The equilibrium condition for the initial steady state is 
\begin{equation}
    \rho^0\nabla\phi^0=\nabla\cdot\tsT^0,
    \label{eq:ini stress}
\end{equation}
where $\tsT^0$ is the tensor representing the static prestress. We
define $\veg^0:=\nabla\phi^0$, and the equation of motion is written
following \cite[(5.43)]{Brazda2017} as
\begin{equation}
    \rho^0\ddtt\veu+\rho^0\nabla \opS(\veu)
    + \rho^0\veu\cdot(\nabla\veg^0)
    -\nabla\cdot(\tsLaINI:\nabla\veu)=0
    \quad \mbox{in }\Domn\setminus\Surf_\frc
    \label{eq:strong form}
\end{equation}
with the initial conditions given as
$$\veu|_{t=0}=0 ,\ \ddt\veu|_{t=0}=0 .$$ 
The mass redistribution potential $ \opS(\veu) $ is 
associated with particle displacement $\veu$ by
\begin{equation}
    \Delta \opS(\veu)=-4\pi \Grav\nabla\cdot(\rho^0\veu),
    \label{eq:grav pot ptb}
\end{equation}
and the prestressed elasticity tensor is a linear map
$\tsLaINI\colon\mathbb{R}^{3\times3}\to\mathbb{R}^{3\times3}$ such
that $(\tsLaINI:\nabla\veu)$ represents the first Piola-Kirchhoff
stress perturbation. The prestressed elasticity tensor is related to
the \textit{in situ} isentropic elastic tensor $\tsC$ by
\[
    \itLaINI_{ijkl}=C_{ijkl}+\tfrac12\big(
     (T_0)_{ij}\delta_{kl}
    +(T_0)_{kl}\delta_{ij}
    +(T_0)_{ik}\delta_{jl}
    -(T_0)_{il}\delta_{jk}
    -(T_0)_{jk}\delta_{il}
    -(T_0)_{jl}\delta_{ik}
    \big).
\]
The non-slipping inner interfaces yield the conventional continuous
boundary conditions,
\begin{equation}
    \jmp{\veu}=0\,,\quad
    \jmp{\ven\cdot\big(\tsLaINI:\nabla\veu\big)}=0
    \quad \mbox{on } \Surf\setminus \fSurf,
    \label{eq:continuous bc}
\end{equation}
and the external boundary yields the traction free condition,
\begin{equation}
    \ven\cdot\big(\tsLaINI:\nabla\veu\big)^\mm=0
    \quad \mbox{on } \partial\Domn.
    \label{eq:freesurf bc}
\end{equation}
We denote by $\tsT_\delta(t,\vex)$ the perturbation of the stress
tensor away from the prestress $\tsT^0$ and subtracting the stress
variation generated by elastic motion.  In other words, the total
stress tensor can be expressed as $\tsT=\tsT^0+\tsT_\delta+(\tsLaINI :
\nabla\veu)$. Several models of $\tsT_\delta$ are available as
approximation of particular physical problems.  One popular model of
hydraulic fracturing considers poroelastic coupling of stress and pore
pressure due to injection of fluid mass \cite{Shapiro2009,
  Segall2015}, while another widely used model is based on thermal
pressurization \cite{Andrews2002, Rice2006, Schmitt2015}, where the
heat is generated by friction resistance to slow sliding and changes
the pore pressure of a fluid-saturated porous medium.  In both
scenarios, the governing equations are diffusive.  Therefore, we
safely assume that $T_\delta(t,\vex)$ is in $W^{1,2}([0,T], \spH)$,
with $\spH$ defined in Section \ref{sec:spaces}.

On the rupture surface
$\Surf_\frc$, the dynamic slip boundary condition (\eg
\cite[(4.57)]{Brazda2017}) and the force equilibrium are satisfied,
which give
\begin{equation}
    \left\{
        \arraycolsep=1.4pt\def\arraystretch{1.7}
        \begin{array}{rl}
            \jmp{\ven\cdot\veu} = & 0 ,
            \\
            \jmp{\vetau_1(\veu)+\vetau_2(\veu)} = & 0 ,
            \\
            \vetau_\frc-(\ven\cdot\big(\tsT^0+\tsTd\big)
            +\vetau_1(\veu)+\vetau_2(\veu))_\TT = & 0 ,
        \end{array}
        \right.
        \quad \mbox{on } \fSurf,
    \label{eq:fric bc}
\end{equation}
with
\begin{equation}
    \left\{
        \arraycolsep=1.4pt\def\arraystretch{1.7}
        \begin{array}{rl}
            \vetau_1(\veu)=&\ven\cdot\big(\tsLaINI:\nabla\veu\big) ,
            \\
            \vetau_2(\veu)=&
            -\nablaS\cdot \big(\veu\big(\ven\cdot\tsT^0\big)\big) ,
        \end{array}
        \right.
    \label{eq:fric bc var}
\end{equation}
both of which are linear functions depending on $\veu$, and the
surface divergence is defined by $\nablaS\cdot\vef
=\nabla\cdot\vef-(\nabla\vef\cdot\ven)\cdot\ven$.  We denote by
  $\sigma$ a scalar: its absolute value stands for the magnitude of
  normal stress, and it takes a positive/negative sign when the normal
  stress is compressive/expansive. Here, we view $\sigma :
  \mathbb{R}^3\to\mathbb{R}$ as a linear map, which maps particle
  displacement $\veu$ to the normal compression magnitude on the
  rupture surface.  That is, $\sigma$ is given by
\begin{equation}
    \sigma(\veu)= -\ven\cdot \big(\ven\cdot(\tsT^0+\tsTd)
    +\vetau_1(\veu)+\vetau_2(\veu)\big).
    \label{eq:def sigma}
\end{equation}
We also define the mean value of $\sigma(\veu)$ across $\Surf_\frc$ by
\begin{equation}
    \bar\sigma(\veu):=
    \tfrac12\big(\sigma(\veu^\pp)+\sigma(\veu^\mm)\big),
    \label{eq:def bsigma}
\end{equation}
which will be used in the construction of the variational form in
Problem~\ref{pb1}.

We denote by $\ves:=[\ddt\veu_\TT]_-^+$ the tangential jump 
of particle velocity across the rupture surface, 
and by $s:=|\ves|,\,\tau_\frc:=|\vetau_\frc|$
the slip-rate and the friction force magnitude, respectively.
The direction of friction force is opposite to the slip velocity, 
following (\eg \cite[eq.\ (4)]{Day2005})
\begin{equation}
    \tau_\frc\ves - s\vetau_\frc = 0.
\end{equation}
The nonlinear relation between $s$ and $\tau_\frc$ is governed by a
rate and state friction law, which we will discuss in
Section \ref{sec:RS law}.

We mention an equivalent description of the wave motion as an
alternative for the above equations (\ref{eq:strong form}),
(\ref{eq:continuous bc}), (\ref{eq:fric bc}) and (\ref{eq:fric bc
  var}). Within this description, the incremental Lagrangian stress
tensor takes the place of the incremental Piola-Kirchhoff stress
tensor, and the equation of motion attains the form (\eg
\cite[(5.52)]{Brazda2017})
\begin{equation}
    \rho^0\ddtt\veu+\rho^0\nabla \opS(\veu)
    - (\nabla\cdot(\rho^0\veu))\veg^0
    + \nabla\cdot(\veu\cdot\nabla\tsT^0)
    -\nabla\cdot(\tsGaINI:\nabla\veu)=0
    \quad \mbox{in }\Domn\setminus\Surf_\frc,
    \label{eq:strong form L}
\end{equation}
where $\tsGaINI\colon\mathbb{R}^{3\times3}\to\mathbb{R}^{3\times3}$ 
is a linear map such that $(\tsGaINI:\nabla\veu)$ represents the
first-order Lagrangian stress perturbation,
which satisfies the same boundary condition as 
(\ref{eq:continuous bc}) and (\ref{eq:fric bc}),
with $\vetau_1$ and $\vetau_2$ replaced by 
$\tilde\vetau_1$ and $\tilde\vetau_2$, given by
\begin{equation}
    \left\{
        \arraycolsep=1.4pt\def\arraystretch{1.7}
        \begin{array}{rl}
            \tilde\vetau_1(\veu)=&\ven\cdot(\tsGaINI:\nabla\veu) ,
            \\
            \tilde\vetau_2(\veu)=&
            -\ven\cdot(\veu\cdot\nablaS\tsT^0)
            -\tsT^0\cdot\nablaS(\ven\cdot\veu) .
        \end{array}
        \right.
    \label{eq:fric bc var L}
\end{equation}
The surface gradient is defined by $\nablaS \vef =\nabla \vef-(\nabla
\vef\cdot\ven)\ven$.  We can apply the same coupling scheme to
(\ref{eq:strong form L}) and (\ref{eq:fric bc var L}) and obtain
similar well-posedness results that will be developed in
Sections~\ref{sec:coupling}--\ref{sec:disc t}.

\subsection{Rate- and state-dependent friction laws}
\label{sec:RS law}

Here, we review the general assumptions for composing a rate- and
state-dependent friction law, which will be essential in the proof of
well-posedness of the coupling problem. A detailed discussion and
analysis can be found in Rice \etal \cite{Rice2001}. Upon introducing
a state variable $\psi$ that measures the average contact maturity,
the nonlinear relation for the magnitude of friction force can be
written in the general form of a scalar function
\begin{equation}
    \tau_\frc=\opF\big(\sigma,s,\psi\big).
    \label{eq:friction law}
\end{equation}
The state variable evolves in time according to the ordinary
differential relation,
\begin{equation}
    \dot \psi + 
    \opG\big(\sigma,\ddt\sigma,s,\psi\big)=0 .
    \label{eq:state ODE}
\end{equation}
A steady state can be obtained for each pair of $(\sigma,s)$ by taking
$\dot{s}=0$ and $\dot{\sigma}=0$, with a corresponding state-variable
value $\psi_\std(\sigma,s)$ satisfying
\begin{equation}
    \begin{split}
    \opG\big(\sigma,0,s,\psi_\std(\sigma,s)\big)=0,
    \end{split}
    \label{eq:steady stat}
\end{equation}
and with the corresponding friction force denoted by
\begin{equation}
    \begin{split}
    \tau_\std(\sigma,s):=\opF\big(\sigma,s,\psi_\std(\sigma,s)\big).
    \end{split}
    \label{eq:steady force}
\end{equation}
For the dynamic rupture problem considered here, we assume that the
rupture remains compressive, in other words, $\sigma$ stays positive
and the friction force $\vetau_\frc$ is non-vanishing if the slip rate
$s$ is non-zero. This assumption puts constraints on the ruptures,
physically meaning that the block mass across the fault should stay in
contact without any ``opening'' portion existing throughout time. This
assumption is naturally implied in typical rock physics experiments,
and applies to most research of earthquake processes. Furthermore, we
invoke

\begin{assumption}
    The nonlinear functions $\opF$ and $\opG$ in \eqref{eq:friction law}
    and \eqref{eq:state ODE} are uniformly Lipschitz continous
    in all the variables.
    \label{ass:Lipschitz op}
\end{assumption}

\begin{assumption}[Amontons-Coulomb law]
    The magnitude of instantaneous friction force is
    proportional to the compressive normal stress magnitude
    in the way that (\cf \cite[eq.\ (4a)]{Ruina1983})
    \begin{equation}
        \opF\big(\sigma,s,\psi\big)=\sigma \,f(s,\psi).
        \label{eq:Coulomb law}
    \end{equation}
    \label{ass:Coulomb law}
\end{assumption}

In the above, $f(s,\psi)$ is usually called the friction
coefficient. A physically meaningful friction coefficient is positive
and bounded, which indicates that $\tau_\frc$ always depends
positively on the magnitude of the compressive normal stress. Based on
experimental observations, it has also been recognized that the
instantaneous friction force depends positively on the slip rate $s$
as well as on the state variable $\psi$.  Together with Assumption
\ref{ass:Lipschitz op}, these empirical laws lead to the following

\begin{assumption}
    There exist positive constants 
    $C_{\opF,s}$, 
    $C_{\opF,s}^\star$, $C_{\opF,\sigma}$,
    $C_{\opF,\sigma}^\star$, $C_{\opF,\psi}$ and $C_{\opF,\psi}^\star$
    such that the nonlinear function $\opF$ in \eqref{eq:friction law}
    satisfies
\begin{equation}
    \begin{aligned}
    &    
    C_{\opF,s}^\star \geq 
    \pdd \opF s (\sigma,s,\psi)
    \geq C_{\opF,s} > 0
    ,\quad
    C_{\opF,\sigma}^\star \geq
    \pdd \opF\sigma(\sigma,s,\psi) 
    \geq C_{\opF,\sigma} >0
    ,
    \\&
    C_{\opF,\psi}^\star \geq
    \pdd \opF\psi(\sigma,s,\psi) 
    \geq C_{\opF,\psi} > 0
    \quad
    \mbox{for all } \sigma,s,\psi \in \mathbb{R}^+.
    \end{aligned}
    \label{eq:bnd F}
\end{equation}
\end{assumption}

The general features of the function $\opG$ are still under debate.
Studies by Linker and Dieterich \cite{Linker1992}, Prakash
\cite{Prakash1998}, Richardson and Marone \cite{Richardson1999},
Bureau \etal \cite{Bureau2000}, and many others show that the effects
of variable compressive normal stress upon friction state can take
various forms. Instead, we use the laws of Dieterich-Ruina
\cite[p.\ 1875]{Rice2001}, which ignore the dependency on variational
normal stress of the nonlinear state ODE (\ref{eq:state ODE}).
In other words, we replace \eqref{eq:state ODE} by a simplified form
\begin{equation}
    \ddt\psi+\opG(s,\psi)=0.
    \label{eq:state ODE1}
\end{equation}
Meanwhile, empirical results from laboratory experiments suggest that
there is a characteristic length for the steady-sliding rupture
evolving into the next steady state after a sudden change of slip
rate, regardless of the value of slip rate.  Elaboration on this
observation follows linearizing (\ref{eq:state ODE1}) as a
perturbation of steady state, which yields (\cf
\cite[eq.\ (7)]{Ruina1983})
\begin{equation}
    \DDt \psi=-\frac{\partial \opG}{\partial \psi} 
    \left( \psi -\psi_\std \right),
    \label{}
\end{equation}
with a solution (\cf \cite[eq.\ (8)]{Ruina1983})
\begin{equation}
    \psi(s,L/s)=\psi_\std(s)+\big(\psi(s,0)-\psi_\std(s)\big)
    \exp\left( -\frac{L}{s} 
    \frac{\partial \opG}{\partial \psi} \right),
    \label{}
\end{equation}
in which the time is replaced by $L/s$, where $L$ is the slip
distance.  The characteristic length is defined as $L_c:=s/(\partial
\opG/\partial \psi)$, physically meaning that after slipping for a
distance $L_c$ under fixed compressive normal stress and slip rate,
the friction coefficient evolves towards the steady state by a
definite ratio $1/e$. The empirical law above indicates that $L_c$ is
independent of $s$, and a linear slip-dependent friction law can be
regarded as a trivial interpretation by letting $\opG=s/L_c$.  The
non-negative nature of $L_c$ and $s$ implies the following

\begin{assumption}
    There exist non-negative constants 
    $C_{\opG,\psi}$,
    $C_{\opG,\psi}^\star$ and 
    $C_{\opG,s}^\star$ such that the nonlinear function
    $\opG$ in \eqref{eq:state ODE1} satisfies
    \begin{equation}
    0\leq C_{\opG,\psi}\leq
    \pdd\opG\psi(s,\psi)
    \leq C_{\opG,\psi}^\star
    ,\quad
    \left|\pdd \opG s (s,\psi)\right| \leq C_{\opG,s}^\star 
    \quad
    \mbox{for all } s,\psi \in \mathbb{R}^+.
        \label{eq:bnd G}
    \end{equation}
\end{assumption}

\subsection{Assumptions on material parameters}
\label{sec:assumption}

We give assumptions on the regularity of parameters following
\cite{deHoop2015}. The reference density, $\rho^0$, is contained in
$L^\infty(\overline\Domn)\cap W^{1,\infty}(\Domn)$, where
$W^{1,\infty}$ is the space of $C^0$ functions whose weak gradient is
in $L^\infty$, and
\[
    \left\{
        \begin{array}{ll}
            C_{\rho^0}^\star \geq \rho^0(\vex)\geq C_{\rho^0} >0 ,
            &\quad \vex\in\overline\Domn\\
            \rho^0(\vex) \equiv 0,&\quad \vex\in\overline\Domn^c ;
        \end{array}
    \right.
\]
thus $\phi^0\in W^{2,2}(\mathbb{R}^3)$ by elliptic regularity. 
The prestress tensor
$\tsT^0 $ 
governed by (\ref{eq:ini stress}) satisfies the symmetries
\[
   (T_0)_{ij} = (T_0)_{ji},\quad i,j\in\{1,2,3\},
\]
and the continuity on interfaces
\[
   \jmp{\ven\cdot\tsT^0}=0.
\]
The stiffness tensor $C_{ijkl}\in L^\infty(\overline\Domn)
^{3\times3\times3\times3}$ satisfies the symmetries
\[
    C_{ijkl}=C_{klij}=C_{jikl}=C_{ijlk},\quad 
    i,j,k,l\in\{1,2,3\}.
\]
It automatically follows that $\tsLaINI\in L^\infty(\overline\Domn)
^{3\times3\times3\times3}$, which also satisfies the symmetry relation
\[
    \itLaINI_{ijkl}=\itLaINI_{klij},\quad
    i,j,k,l\in\{1,2,3\}.
\]
Moreover, we borrow the following assumptions from 
{\it de Hoop, et al} \cite{deHoop2015}
(the assumptions of Theorem 2)

\begin{assumption}
    We have
    \begin{enumerate}
        \item $(T_0)_{ij}\in L^\infty(\Domn)$, 
            with $\mathrm{Tr}(\tsT^0):=\sum_{i\in\{1,2,3\}}(T_0)_{ii}$
            bounded away from 0;
        \item $\veg^0 \in L^\infty(\Domn)$ with $\norm{\veg^0}$ 
            bounded away from 0;
        \item there exists $\mathfrak{c}>0 $ such that for any 2-tensors
            $\eta_{ij}$, 
            \[
            \mathfrak{c}|\eta_{ij}+\eta_{ji}|^2\leq
        \itLaINI_{ijkl}
        \eta_{kl}\eta_{ij}.
            \]
    \end{enumerate}
    \label{ass:T0}
\end{assumption}

\begin{remark}
We ignore the liquid regions including the outer core and ocean layer.
The analysis of a self-gravitating planet with fluid regions can be
found in {\it de Hoop, et al} \cite{deHoop2015}.  Including fluid
regions does not harm the well-posedness of the coupled problem as
long as the intersections of $\Surf_\frc$ and fluid-solid interfaces
are finite curves with zero measure, which do not appear in the
analysis.
\end{remark}

\section{The variational form}
\label{sec:variational form}

We bring the overall problem in variational form with coupling with a
nonlinear algebraic relation and time evolution of state on the
interior slipping boundary or rupture plane. In this section, we
present the procedure and introduce the relevant Sobolev spaces.

\subsection{Energy spaces and trace theorem}
\label{sec:spaces}

In the Lipschitz composite domain $\Domn\subseteq \mathbb{R}^3$,
the space of square integrable functions is defined as
\[
    L^2(\Domn)=\bigg\{ v\,\bigg| \, 
        \sum_{k=1}^{k_0} \norm{v}_{L^2(\Domn_k)}^2<\infty \bigg\}.
\]
We define the Sobolev space $\spH$ as
\[
    \begin{split}
    \spH=&\bigg\{\vev\in L^2(\Domn)^3
        \,\bigg|\,
        \sum_{k=1}^{k_0} \norm{\nabla \vev}_{L^2(\Domn_k)}^2<\infty
    \bigg\} ,
    \end{split}
\]
with the norm
\[
    \norm{\vev}_{\spH}:=\bigg(
    \sum_{k=1}^{k_0} \norm{\vev}_{H^1(\Domn_k)}^2\bigg)^{1/2}
    ;
\]
we denote its dual space with regard to $L^2(\Domn)$ by $\spH'$. With
Assumption \ref{ass:T0} (3), the 4-tensor $\tsLaINI$ is convex. We
denote by $\norm{\centerdot}_{L^2(\Domn;\rho^0)}$ the weighted norm
\begin{equation}
    \begin{split}
        \norm{\veu}_{L^2(\Domn;\rho^0)}:=&
    \sum_{k=1}^{k_0}
    \int_{\Domn_k} 
    \rho^0|\veu|^2\dd\Domn.
    \end{split}
    \label{eq:weighted norm}
\end{equation}
Clearly, under the assumptions introduced in
Section~\ref{sec:assumption}, $\norm{\centerdot}_{L^2(\Domn)}^2$ and
$\norm{\centerdot}_{L^2(\Domn;\rho^0)}^2$ are equivalent,
\begin{equation*}
    \begin{array}{rclcl}
    C_{\rho^0}\norm{\vev}_{L^2(\Domn)}^2 
    & \leq &
    \norm{\vev}_{L^2(\Domn;\rho^0)}^2 
    & \leq &
    C_{\rho^0}^\star\norm{\vev}_{L^2(\Domn)}^2 ,
\quad \forall \vev\in L^2(\Domn).
    \end{array}
\end{equation*}
We denote by $\langle\,,\,\rangle_{\Surf_\frc}$ the duality pairing
between $H^{-1/2}(\Surf_\frc)$ and $H^{1/2}(\Surf_\frc)$, and
by $\langle\,,\,\rangle_{\Domn}$ the duality pairing between $\spH$
and its dual $\spH'$. We study the weak solution of the coupling
problem in the space $V_1 \times V_2$,
\begin{equation}
    \begin{split}
        V_1:=&\left\{
            \veu \in 
            L^{\infty}\big([0,T];\spH\big)
            \,\left\vert\,
            \begin{array}{l}
            \ddt\veu \in 
            L^\infty\big([0,T];L^2(\Domn)\big)\cap
            L^2\big([0,T];\spH\big),
            \\[2mm]
            \ddtt\veu \in L^2\big([0,T];\spH'\big),
            \\[2mm]
            \jmp{\ven\cdot\veu}=0\mbox{ on } \Surf_\frc 
            \end{array}
            \right.
        \right\},
        \\
        V_2:=&\left\{
            \psi \in L^{\infty}\big([0,T];L^2(\Surf_\frc)\big)
            \,\Big\vert\,
            \ddt\psi \in L^2\big([0,T];L^2(\Surf_\frc)\big)
        \right\}.
    \end{split}
    \label{eq:space of sol}
\end{equation}

We revisit the general Sobolev trace theorem (\eg \cite[Theorem
  1.3.1]{Quarteroni2008}) and rewrite it for interior boundaries. The
quantities $\vev^\pm$ related to any $\vev\in \spH$ are defined in
(\ref{eq:def pm}).

\begin{lemma}
    Let $\Surf_{\frc_{k,k'}} = \partial\Domn_k\cap \partial\Domn_{k'} 
    \setminus \partial \Domn$ be a Lipschitz continuous interior 
    boundary for two adjacent subdomains $\Domn_k$ and $\Domn_{k'}$.
    \begin{enumerate}[label=(\alph*)]
    \item There exist two unique linear continuous maps (trace
      operators) $T_{\frc_{k,k'}^\pp}: H^1(\Domn_k)^3 \linebreak \to
      H^{1/2}(\Surf_{\frc_{k,k'}})^3$ and $T_{\frc_{k,k'}^\mm}:
      H^1(\Domn_{k'})^3\to H^{1/2}(\Surf_{\frc_{k,k'}})^3$, such that
      $T_{\frc_{k,k'}^\pp}(\vev)=\vev^\pp|_{\Surf_{\frc_{k,k'}}}$ and
      $T_{\frc_{k,k'}^\mm}(\vev)=\vev^\mm|_{\Surf_{\frc_{k,k'}}}$ for
      each $\vev\in \spH$.
    \item There exist two linear continuous maps (extension operators)
      $R_{\frc_{k,k'}^\pp}: H^{1/2}(\Surf_{\frc_{k,k'}})^3 \linebreak
      \to H^1(\Domn_k)^3$ and $R_{\frc_{k,k'}^\mm}:
      H^{1/2}(\Surf_{\frc_{k,k'}})^3 \to H^1(\Domn_{k'})^3$, such that
      $ T_{\frc_{k,k'}^\pp}\circ R_{\frc_{k,k'}^\pp}(\vev)=
      T_{\frc_{k,k'}^\mm}\circ R_{\frc_{k,k'}^\mm}(\vev)=\vev$, for
      each $\vev\in H^{1/2}(\Surf_{\frc_{k,k'}})^3$.
    \end{enumerate}
    \label{th:trace thm}
\end{lemma}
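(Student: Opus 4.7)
The plan is to reduce both statements to the classical Sobolev trace and lifting theorems on each single Lipschitz subdomain, and then to transfer them to the interior boundary piece $\Surf_{\frc_{k,k'}}$, which is itself an open Lipschitz portion of both $\partial\Domn_k$ and $\partial\Domn_{k'}$. The Lipschitz composite assumption on $\Domn$ from Section~\ref{sec:assumptions} is exactly what makes each of these reductions go through without further regularity hypotheses.

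For part~(a), I would first apply \cite[Theorem 1.3.1]{Quarteroni2008} on each subdomain to obtain unique continuous linear operators $T_k \colon H^1(\Domn_k)^3 \to H^{1/2}(\partial\Domn_k)^3$ and $T_{k'} \colon H^1(\Domn_{k'})^3 \to H^{1/2}(\partial\Domn_{k'})^3$ extending the pointwise boundary trace of smooth functions. Composition with the continuous restriction $H^{1/2}(\partial\Domn_k)^3 \to H^{1/2}(\Surf_{\frc_{k,k'}})^3$, well defined since $\Surf_{\frc_{k,k'}}$ is an open Lipschitz subset of $\partial\Domn_k$, yields $T_{\frc_{k,k'}^\pp}$; the analogous construction on $\Domn_{k'}$ produces $T_{\frc_{k,k'}^\mm}$. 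Linearity and continuity are inherited, uniqueness follows from uniqueness of $T_k$ together with density of $C^\infty(\overline{\Domn_k})^3$ in $H^1(\Domn_k)^3$, and the identification $T_{\frc_{k,k'}^\pp}(\vev) = \vev^\pp|_{\Surf_{\frc_{k,k'}}}$ for every $\vev \in \spH$ holds on the dense subspace of smooth functions and extends by continuity, using that the map $\vev \mapsto \vev|_{\overline{\Domn_k}}$ is continuous from $\spH$ into $H^1(\Domn_k)^3$.

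For part~(b), I would invert this procedure in two steps. Given $\vev \in H^{1/2}(\Surf_{\frc_{k,k'}})^3$, I first build a bounded linear extension $E_k \colon H^{1/2}(\Surf_{\frc_{k,k'}})^3 \to H^{1/2}(\partial\Domn_k)^3$ such that $(E_k \vev)|_{\Surf_{\frc_{k,k'}}} = \vev$, and then compose with the classical continuous right inverse $\widetilde{R}_k \colon H^{1/2}(\partial\Domn_k)^3 \to H^1(\Domn_k)^3$ of $T_k$ from \cite[Theorem 1.3.1]{Quarteroni2008}. Setting $R_{\frc_{k,k'}^\pp} := \widetilde{R}_k \circ E_k$, the composition $T_{\frc_{k,k'}^\pp} \circ R_{\frc_{k,k'}^\pp}$ equals the restriction to $\Surf_{\frc_{k,k'}}$ applied to $T_k \circ \widetilde{R}_k \circ E_k = E_k$, and this is the identity on $H^{1/2}(\Surf_{\frc_{k,k'}})^3$ by construction of $E_k$. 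The construction of $R_{\frc_{k,k'}^\mm}$ is identical, with $\Domn_{k'}$ in place of $\Domn_k$.

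The main technical obstacle is the existence of the continuous extension $E_k$: an $H^{1/2}$-function on an open subset of a Lipschitz hypersurface cannot in general be extended by zero without losing $H^{1/2}$-regularity across $\partial\Surf_{\frc_{k,k'}}$. I would handle this with a standard geometry-aware construction, namely a local chart cover of a tubular neighbourhood of $\partial\Surf_{\frc_{k,k'}}$ inside $\partial\Domn_k$, flattening the boundary curve on each chart, performing an explicit $H^{1/2}$ extension across it (for instance by reflection or by harmonic extension of the associated half-space trace), and patching with a partition of unity subordinate to the cover. This step relies in an essential way on the Lipschitz regularity of both $\Surf_{\frc_{k,k'}}$ and $\partial\Surf_{\frc_{k,k'}}$, both of which are built into the hypotheses of Section~\ref{sec:assumptions}. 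Once $E_k$ is in hand, the remainder of the proof is a routine transcription of the single-domain trace theorem.
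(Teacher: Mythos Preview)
Your proposal is correct and aligns with the paper's approach: the paper does not give a proof of this lemma at all, but simply presents it as a direct restatement of the classical Sobolev trace theorem \cite[Theorem 1.3.1]{Quarteroni2008} adapted to interior boundaries. Your reduction to the single-domain trace and lifting theorems on each Lipschitz subdomain $\Domn_k$, $\Domn_{k'}$ is exactly the intended content, and you have gone further than the paper by spelling out the construction of the extension $E_k$ needed for part~(b).
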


This lemma implies the existence of constants $C_{\frc_{k,k'}}^\pm>0$
such that
\begin{equation}
    \begin{split}
        &
    \norm{T_{\frc_{k,k'}^\pp}(\vev)}_{L^2(\Surf_{\frc_{k,k'}})}^2\leq 
    C_{\frc_{k,k'}^\pp}\norm{\vev}_{H^1(\Domn_k)}^2 
    \mbox{ and }
    \\&
    \norm{T_{\frc_{k,k'}^\mm}(\vev)}_{L^2(\Surf_{\frc_{k,k'}})}^2\leq 
    C_{\frc_{k,k'}^\mm}\norm{\vev}_{H^1(\Domn_{k'})}^2,
    \quad\forall v\in \spH.
    \end{split}
    \label{eq:trace ineq}
\end{equation}
We denote by $T_{\frc}$ the direct union of all 
$T_{\frc_{k,k'}^\pm}$,
and $C_\frc=\max_{(k,k';\pm)} C_{\frc_{k,k'}^\pm}$.
We can then define the tangential
jump operator $T_\frcpm$ for interior boundaries that generates
$\ves=T_\frcpm(\ddt\veu)$ and yields the following lemma, which can be
obtained directly from Lemma \ref{th:trace thm}. 

\begin{lemma}
    Let $\Domn$ be a Lipschitz composite domain 
    and $\Surf_\frc$ be a subset of its 
    Lipschitz interior boundaries.
    \begin{enumerate}[label=(\alph*)]
        \item There exists a unique linear continuous map 
            $T_\frcpm: \spH\to 
                H^{1/2}(\Surf_\frc)^3$ 
            such that $T_{\frcpm}(\vev)=\jmp{\vev_\TT}$,
            for each $\vev\in \spH$.
        \item There exists a linear continuous map 
            $R_{\frcpm}: H^{1/2}(\Surf_{\frc})^3 
            \to \spH$ such that
            $T_{\frcpm}\circ R_{\frcpm}(\vev)=\vev$ for each 
            $\vev\in H^{1/2}(\Surf_{\frc})^3$.
        \item There exists a constant $C_{\frcpm}>0$ such that
        \begin{equation}
            \norm{\jmp{\vev_\TT}}_{L^2(\Surf_{\frc})}^2=
            \norm{T_{\frcpm}(\vev)}_{L^2(\Surf_{\frc})}^2\leq 
            C_{\frcpm}\norm{\vev}_{\spH}^2,
            \quad\forall \vev\in \spH.
            \label{eq:trace ineq jmp}
        \end{equation}
    \end{enumerate}
    \label{th:trace thm jmp}
\end{lemma}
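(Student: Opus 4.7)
The plan is to derive Lemma~\ref{th:trace thm jmp} directly from the piecewise trace theorem by assembling the individual trace operators on each Lipschitz piece $\Surf_{\frc_{k,k'}}=\partial\Domn_k\cap\partial\Domn_{k'}\setminus\partial\Domn$ that composes $\Surf_\frc$. Since $\spH$ is defined piecewise as the space of functions whose restrictions to each subdomain lie in $H^1(\Domn_k)^3$, I may apply Lemma~\ref{th:trace thm}(a) to each side and obtain well-defined, unique traces $\vev^\pp:=T_{\frc_{k,k'}^\pp}(\vev|_{\Domn_k})\in H^{1/2}(\Surf_{\frc_{k,k'}})^3$ and $\vev^\mm:=T_{\frc_{k,k'}^\mm}(\vev|_{\Domn_{k'}})$. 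The normal vector $\ven$ is shared by the two sides and lies in $L^\infty$ by assumption.

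For part (a), I define
\[
T_\frcpm(\vev)\big|_{\Surf_{\frc_{k,k'}}}
:= (\vev^\pp-\vev^\mm)-\big((\vev^\pp-\vev^\mm)\cdot\ven\big)\ven,
\]
which by construction equals $\jmp{\vev_\TT}$ on each piece, and then take the direct union over the finite collection of adjacent pairs. Linearity, continuity, and uniqueness are inherited from the corresponding properties of the piecewise trace operators $T_{\frc_{k,k'}^\pm}$. For part (b), given $\vev\in H^{1/2}(\Surf_\frc)^3$, I split $\vev$ by its support on the pairwise-disjoint pieces $\Surf_{\frc_{k,k'}}$ (which meet only on curves of measure zero, by the composite structure of $\Domn$). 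On each piece I apply the one-sided extension of Lemma~\ref{th:trace thm}(b) to set $\vew_k:=R_{\frc_{k,k'}^\pp}(\vev|_{\Surf_{\frc_{k,k'}}})\in H^1(\Domn_k)^3$, and take the contribution on the $\mm$-side to be zero; if a single subdomain is shared by several such pieces the resulting contributions are summed. This yields a linear continuous map $R_\frcpm$ with $T_\frcpm\circ R_\frcpm(\vev)=\vev$.

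For part (c), I chain the piecewise trace inequalities (\ref{eq:trace ineq}) with the uniform $L^\infty$ bound on $\ven$, arriving at
\[
\norm{\jmp{\vev_\TT}}_{L^2(\Surf_\frc)}^2
\leq C\sum_{(k,k')}\Big(\norm{\vev^\pp}_{L^2(\Surf_{\frc_{k,k'}})}^2+\norm{\vev^\mm}_{L^2(\Surf_{\frc_{k,k'}})}^2\Big)
\leq C_\frcpm\norm{\vev}_{\spH}^2,
\]
the final constant absorbing the common trace constant $C_\frc$, a factor arising from $\norm{\ven}_{L^\infty}$ in the tangential projection, and the multiplicity with which each subdomain $\Domn_k$ appears among the adjacent pairs whose shared boundary meets $\Surf_\frc$ (a finite number, since $k\leq k_0$).

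The one genuinely delicate point is the target space in part (a): because $\ven$ is only $L^\infty$, pointwise multiplication by $\ven$ is bounded on $L^2$ but does not in general preserve $H^{1/2}$. Asserting that $T_\frcpm(\vev)$ truly lies in $H^{1/2}(\Surf_\frc)^3$ rather than merely in $L^2$ therefore requires invoking piecewise regularity of $\ven$ on each Lipschitz component of $\Surf_\frc$ (inherited from the composite-domain hypothesis), or interpreting the fractional trace space in the appropriate dual sense. Once this technical point is settled, the remaining assembly — linearity, continuity, uniqueness, and the norm bound — is a direct consequence of Lemma~\ref{th:trace thm}.
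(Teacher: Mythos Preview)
Your approach is correct and matches the paper's: the paper simply states that the lemma ``can be obtained directly from Lemma~\ref{th:trace thm}'' without giving any further argument, so your piecewise assembly of the trace and extension operators is exactly the intended derivation. Your flagging of the $H^{1/2}$ target-space issue when $\ven$ is merely $L^\infty$ is a legitimate technical point that the paper does not address; resolving it through the piecewise-Lipschitz regularity of $\ven$ on each component, as you suggest, is the natural fix.
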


We introduce the (bounded linear) Dirichlet-to-Neumann maps
\cite{Salo2008,Beretta2014,Beretta2017} associated with the
elastic-gravitational system of equations (\ref{eq:strong form}),
\begin{equation}
    \begin{split}
        &
    \itLa_{\tsLaINI,\rho^0,\veg^0}:
    H^{1/2}(\fSurf)^3 \ni T_\frc (\veu) \to 
    \big(\ven\cdot(\tsLaINI:\nabla\veu)\big)\big|_{\fSurf}
    \ \in H^{-1/2}(\fSurf)^3,
    \\&
    \itLa'_{\tsLaINI,\rho^0,\veg^0}:
    H^{1/2}(\fSurf)^3 \ni T_\frc (\veu) \to 
    \big(\nablaS\cdot(\veu(\ven\cdot\tsT^0))\big)\big|_{\fSurf}
    \in H^{-1/2}(\fSurf)^3.
    \end{split}
    \label{eq:def opLa}
\end{equation}
Clearly, 
\begin{equation}
    \begin{split}
        &
    \norm{\vetau_1}_{H^{-1/2}(\fSurf)}^2=
    \norm{\itLa_{\tsLaINI,\rho^0,\veg^0}\circ T_\frc\,\big(\veu\big)}
    _{H^{-1/2}(\fSurf)}^2
    \leq C_{\itLa} \norm{\veu}_{H^{1/2}(\fSurf)}^2,
    \\ &
    \norm{\vetau_2}_{H^{-1/2}(\fSurf)}^2=
    \norm{\itLa'_{\tsLaINI,\rho^0,\veg^0}\circ T_\frc\,\big(\veu\big)}
    _{H^{-1/2}(\fSurf)}^2
    \leq C_{\itLa'} \norm{\veu}_{H^{1/2}(\fSurf)}^2.
    \end{split}
    \label{eq:DtN bnd}
\end{equation}
We obtain the following lemma which will be used in the proofs of 
Theorems \ref{th1} and \ref{th2}.

\begin{lemma}
\label{lm}
Let $T_\frc$, $T_\frcpm$, $\itLa_{\tsLaINI,\rho^0,\veg^0}$ 
and $\itLa'_{\tsLaINI,\rho^0,\veg^0}$
be defined as in Lemma \ref{th:trace thm}, Lemma \ref{th:trace thm jmp}, 
and equation \eqref{eq:def opLa},
then there exist constants $C_I,C_I'>0$ such that 
    \begin{equation}
        \begin{split}
        \big\langle
        \itLa_{\tsLaINI,\rho^0,\veg^0}\circ T_\frc \,(\veu)\,,\,
        T_\frcpm (\vev)\big\rangle_{\Surf_\frc} &\leq 
        C_I\norm{\veu}_{\spH}\norm{\vev}_{\spH},
\\
        \jmp{\big\langle
        \itLa'_{\tsLaINI,\rho^0,\veg^0}\circ T_\frc \,(\veu)\,,\,
        T_\frc (\vev)\big\rangle_{\Surf_\frc}} &\leq 
        C'_I\norm{\veu}_{\spH}\norm{\vev}_{\spH},\quad
        \forall \veu,\vev\in \spH\,.
        \end{split}
        \label{eq:space intp}
    \end{equation}
\end{lemma}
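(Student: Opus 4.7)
The plan is a straightforward composition of bounded maps via three ingredients: the $H^{-1/2}$--$H^{1/2}$ duality pairing inequality, the Dirichlet--to--Neumann operator bounds \eqref{eq:DtN bnd}, and the trace continuity of $T_\frc$ and $T_\frcpm$ from $\spH$ into $H^{1/2}(\Surf_\frc)^3$ established in Lemmas \ref{th:trace thm} and \ref{th:trace thm jmp}(a).

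For the first inequality, I would apply the duality pairing to write
\[
\big\langle \itLa_{\tsLaINI,\rho^0,\veg^0}\circ T_\frc(\veu),\,T_\frcpm(\vev)\big\rangle_{\Surf_\frc}
\leq
\norm{\itLa_{\tsLaINI,\rho^0,\veg^0}\circ T_\frc(\veu)}_{H^{-1/2}(\Surf_\frc)}
\norm{T_\frcpm(\vev)}_{H^{1/2}(\Surf_\frc)},
\]
then bound the first factor by $C_\itLa^{1/2}\norm{T_\frc(\veu)}_{H^{1/2}(\Surf_\frc)}$ via the first line of \eqref{eq:DtN bnd}, and subsequently by a constant multiple of $\norm{\veu}_\spH$ via the trace continuity of Lemma \ref{th:trace thm}. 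The second factor is controlled by a constant multiple of $\norm{\vev}_\spH$ via Lemma \ref{th:trace thm jmp}(a). Defining $C_I$ as the product of these three constants completes the first estimate.

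For the second inequality, I would interpret the jump of the pairing as the difference of the pairings on the two sides of $\Surf_\frc$, namely
\[
\jmp{\big\langle\itLa'_{\tsLaINI,\rho^0,\veg^0}\circ T_\frc(\veu),\,T_\frc(\vev)\big\rangle_{\Surf_\frc}}
= \big\langle(\itLa'\circ T_\frc\veu)^\pp,\,T_\frc(\vev)^\pp\big\rangle_{\Surf_\frc}
- \big\langle(\itLa'\circ T_\frc\veu)^\mm,\,T_\frc(\vev)^\mm\big\rangle_{\Surf_\frc},
\]
and apply the triangle inequality followed by the same three-step chain to each term, now using the second line of \eqref{eq:DtN bnd} in place of the first and Lemma \ref{th:trace thm} for both factors. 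The resulting constant $C_I'$ has the analogous structure $C_{\itLa'}^{1/2}$ times the relevant trace constants.

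Given the preceding lemmas, this is a routine composition and no genuine obstacle is anticipated; the proof is essentially bookkeeping. The one point worth emphasizing is that \eqref{eq:trace ineq jmp} only states the $L^2$ bound for $T_\frcpm$, whereas the $H^{1/2}$ continuity asserted as part of Lemma \ref{th:trace thm jmp}(a) is what must be invoked to control $\norm{T_\frcpm(\vev)}_{H^{1/2}(\Surf_\frc)}$; invoking that continuity statement directly dispenses with the issue.
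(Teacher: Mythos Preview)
Your proposal is correct and follows essentially the same approach as the paper: the paper applies the $H^{-1/2}$--$H^{1/2}$ duality inequality, then invokes \eqref{eq:trace ineq}, \eqref{eq:trace ineq jmp} and \eqref{eq:DtN bnd} to obtain $C_I=C_{\itLa}C_\frc C_\frcpm$, and disposes of the second inequality with ``in the same manner.'' Your observation that the $H^{1/2}$ continuity of $T_\frcpm$ from Lemma~\ref{th:trace thm jmp}(a), rather than the $L^2$ bound \eqref{eq:trace ineq jmp}, is what is actually needed is a valid sharpening of the paper's presentation.
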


\begin{proof}
Based on the Cauchy-Schwartz inequality \cite{Salo2008},
\begin{equation}
    \big\langle
    \itLa_{\tsLaINI,\rho^0,\veg^0}\circ T_\frc \,(\veu)\,,\,
    T_\frcpm (\vev)\big\rangle_{\Surf_\frc}\leq 
    \norm{\itLa_{\tsLaINI,\rho^0,\veg^0}\circ T_\frc \,(\veu)}
    _{H^{-1/2}(\Surf_\frc)}
    \norm{T_\frcpm (\vev)}_{H^{1/2}(\Surf_\frc)}.
    \label{eq:th 4 pf}
\end{equation}
Using (\ref{eq:trace ineq}), (\ref{eq:trace ineq jmp}) and 
(\ref{eq:DtN bnd}) in (\ref{eq:th 4 pf}), we immediately obtain
\[
    \big\langle
    \itLa_{\tsLaINI,\rho^0,\veg^0}\circ T_\frc \,(\veu)\,,\,
    T_\frcpm (\vev)\big\rangle_{\Surf_\frc}\leq 
    (C_{\itLa}C_\frc C_\frcpm)
    \norm{\veu}_{\spH}\norm{\vev}_{\spH}\,.
\]
Thus $C_I=C_{\itLa}C_\frc C_\frcpm$. We can prove the second
inequality in (\ref{eq:space intp}) in the same manner.
\end{proof}

\subsection{A weak form of the system of equations and viscosity
            solutions}

We introduce the weak form on $\Domn$ while requiring the nonlinear
friction law to hold pointwise. We then follow the approach of
Martins and Oden \cite{Martins1987} and Ionescu \etal
\cite{Ionescu2003} to prove the well-posedness.

We introduce a convex and G\^ateaux differentiable approximation to
friction force $\vetau_\frc$ 
by defining the regularized slip rate as 
(\cf \cite[eq. (30)]{Ionescu2003})
\begin{equation}
    \sPsi(\vev)=\sqrt{|\vev|^2+\veps^2}-\veps
    \label{eq:psi-eps}
\end{equation}
with a small constant $\veps > 0$, whose gradient with regard
to the slip velocity is denoted by
\begin{equation}
    \bPsi(\vev)= \frac\vev{\sqrt{|\vev|^2+\veps^2}}.
    \label{eq:D-eps}
\end{equation}
Clearly, 
\begin{align}
    0\leq \sPsi(\vev) \leq |\vev|,&\quad
    \forall \vev\in \mathbb{R}^3 ,
    \label{eq:bnd Psi}
    \\
    |\bPsi(\vev)\cdot\vew| \leq |\bPsi(\vev)|\,|\vew|\leq |\vew|,&\quad
    \forall \vev,\vew \in \mathbb{R}^3 ,
    \label{eq:bnd dPsi}
    \\
    \big| \sPsi(\vev) - |\vev| \big|\leq \veps, &\quad
    \forall \vev\in \mathbb{R}^3 .
    \label{eq:bnd Psi err}
\end{align}
We then introduce the nonlinear map
$\scalF:H^{-1/2}(\Surf_\frc)\times L^2(\Surf_\frc)\times 
\spH \times \spH \to \mathbb{R}$ 
as a family of regularized friction functionals,
\[
    \begin{split}
        &
    \scalF(\sigma,\psi,\veu,\vev)=
    \int_{\Surf_\frc} \opF\big(\sigma,|T_\frcpm(\veu)|,\psi\big)\,
    \sPsi\big(T_\frcpm(\vev)\big)\dd\Surf,
    \\&\hspace{2cm}
    \text{for all}\quad\sigma\in H^{-1/2}(\Surf_\frc),\quad
    \psi \in L^2(\Surf_\frc),
    \quad \veu,\vev \in \spH\,.
    \end{split}
\]
We denote by $\bcalF:H^{-1/2}(\Surf_\frc) \times L^2(\Surf_\frc)
\times \spH \to H^{-1/2}(\Surf_\frc) $ the derivative of $\scalF$
with respect to the final variable such that
\[
    \big\langle\bcalF(\sigma,\psi,\vev),\vew \big\rangle_{\Surf_\frc}
    =\int_{\Surf_\frc} \opF\big(\sigma,|T_\frcpm(\vev)|,\psi\big)\,
    \bPsi\big(T_\frcpm(\vev)\big)\cdot T_\frcpm(\vew) \dd\Surf.
\]
In other words, $\bcalF(\sigma,\psi,\vev)$ represents the regularized
replacement of $\vetau_\frc$.

We write (\ref{eq:strong form})-(\ref{eq:fric bc var}) in the
following weak form, appended with an artificial (temporal) viscosity
term weighted by $\gamma > 0$ and a small and fixed regularization
coefficient $\veps$ upon the friction law.

\medskip\medskip

\begin{problem} \label{problem:1}
Let $\veps$ and $\gamma$ be fixed strictly positive constants,
    find $ (\veu,\psi) \in V_1\times V_2 $ such that
\begin{align}
        &
    \begin{aligned}
        &
    \big\langle \rho^0\ddtt\veu \,,\, \vew \big\rangle_{\Domn}
    +a_3(\veu \,,\, \vew)
    -\frac1{4\pi \Grav}
    \innp{ \nabla \opS(\veu) \,,\, \nabla \opS(\vew) }
    _{L^2(\mathbb{R}^3)}
    +\gamma \innp{\ddt\veu \,,\, \vew }_{\spH}
    \\ & \hspace{1cm}
    +\big\langle\bcalF\big(\bar\sigma(\veu),\psi,
        \ddt\veu\big)
    \,,\, \vew \big\rangle_{\Surf_\frc}
    -\jmp{\big\langle\vetau_2(\veu) \,,\, \vew \big\rangle
    _{\Surf_\frc} }
    \\ & \hspace{0.5cm}
    =
    \big\langle\ven\cdot(\tsT^0+\tsTd) \,,\, 
    T_\frcpm(\vew) \big\rangle_{\Surf_\frc} ,
    \end{aligned}
    \label{eq:weakform}
    \\[2mm] &
    \innp{ \ddt \psi,\varphi }_{L^2(\Surf_\frc)}
    +
    \innp{ \opG\big(|T_\frcpm(\ddt\veu)|,
        \psi\big),
    \varphi }_{L^2(\Surf_\frc)}
    =0 ,
    \label{eq:weakform ODE}
\end{align}
are satisfied almost everywhere in time,
with sesquilinear form $a_3(\veu,\vev)$ defined by
\begin{align}
    \begin{aligned}
        &
    a_3(\veu\,,\,\vev)=
    \int_\Domn (\tsLaINI:\nabla\veu):\nabla\vew \dd\Domn
    \\ & \hspace{1cm}
    -\int_\Domn \Sym\Big\{(\veg^0\cdot\veu)(\vev\cdot\nabla\rho^0)
        +\rho^0(\veg^0\cdot\veu)(\nabla\cdot\vev)
    +\rho^0\veg^0\cdot(\nabla\veu)\cdot\vev\Big\}\dd\Domn,
    \end{aligned}
    \label{eq:a2}
\end{align}
and the linear maps $\vetau_2$ and $\bar\sigma$ defined in
\eqref{eq:fric bc var} and \eqref{eq:def sigma} on $\Surf_\frc$ in the
sense of traces, and hold for all $(\vew,\varphi) \in V_1\times V_2 $.
\label{pb1}
\end{problem}

\medskip\medskip

\noindent
In the above, $\Sym$ is the symmetrization such that for any expression
$B(\veu,\vev)$, we have 
\[
    \Sym\{B(\veu,\vev)\}=\tfrac12\big(
    B(\veu,\vev)+B(\vev,\veu)\big).
\]
The proof of consistency between the strong form \eqref{eq:strong
  form} and the weak form \eqref{eq:weakform} can be found in {\it de
  Hoop, et al} \cite[Lemma 3]{deHoop2015}.  Moreover,
$a_3(\veu\,,\,\vev)$ is coercive

\begin{lemma}
    With the assumptions in Section~\ref{sec:assumption}, there exist 
    $C_{a_3},C_{a_3}'>0$ such that for all $\veu\in \spH$,
    \begin{equation}
        a_3(\veu\,,\,\veu)\geq C_{a_3} \norm{\veu}_{\spH}^2
        -C_{a_3}'\norm{\veu}_{L^2(\Domn)}^2.
        \label{eq:coercivity a3}
    \end{equation}
    \label{th:coercivity}
\end{lemma}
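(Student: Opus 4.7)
The plan is to separate $a_3(\veu,\veu)$ into the elastic principal part (the integral of $(\tsLaINI\!:\!\nabla\veu)\!:\!\nabla\veu$) and the lower-order gravitational/density corrections, then derive coercivity from the former while absorbing the latter.

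For the principal part, I would invoke Assumption \ref{ass:T0}(3) to obtain
\[
\int_\Domn (\tsLaINI:\nabla\veu):\nabla\veu \dd\Domn
\ \geq\ \mathfrak{c} \sum_{k=1}^{k_0}\int_{\Domn_k} |\nabla\veu+(\nabla\veu)^\T|^2 \dd\Domn,
\]
which controls the symmetrized gradient in $L^2$ on each subdomain $\Domn_k$. Next I would apply Korn's second inequality on each Lipschitz subdomain $\Domn_k$ separately (no boundary conditions are imposed on $\veu$ aside from $\jmp{\ven\cdot\veu}=0$, so only the weak form of Korn is available): this yields constants $C_k>0$ with
\[
\|\veu\|_{H^1(\Domn_k)}^2 \ \leq\ C_k \Bigl( \|\nabla\veu+(\nabla\veu)^\T\|_{L^2(\Domn_k)}^2 + \|\veu\|_{L^2(\Domn_k)}^2 \Bigr).
\]
Summing over $k$ and using the definition of the $\spH$ norm produces a bound of the form
\[
\int_\Domn (\tsLaINI:\nabla\veu):\nabla\veu \dd\Domn \ \geq\ C_1\|\veu\|_\spH^2 - C_1'\|\veu\|_{L^2(\Domn)}^2
\]
for some constants $C_1, C_1'>0$, which already has the structure of the claimed estimate.

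For the three lower-order terms inside $\Sym\{\cdot\}$, I would use that $\rho^0\in W^{1,\infty}(\Domn)$ (so $\nabla\rho^0\in L^\infty$), $\veg^0\in L^\infty(\Domn)$, and $\rho^0\in L^\infty(\overline\Domn)$, as listed in Section~\ref{sec:assumption} and Assumption~\ref{ass:T0}(2). Each of the three contributions is a product of at most one factor of $\veu$ and at most one factor of $\nabla\veu$ (or $\nabla\cdot\vev = \nabla\cdot\veu$ when $\vev=\veu$), times bounded coefficients. Applying Cauchy--Schwarz followed by Young's inequality with a small parameter $\eta>0$,
\[
\int_\Domn |\veg^0|\,|\veu|\,|\nabla\veu|\dd\Domn \ \leq\ \tfrac{\eta}{2}\|\nabla\veu\|_{L^2(\Domn)}^2 + \tfrac{1}{2\eta}\|\veg^0\|_\infty^2\|\veu\|_{L^2(\Domn)}^2,
\]
and analogously for the other two summands, bounds the correction terms by $\eta C_2\|\veu\|_\spH^2 + C_2(\eta)\|\veu\|_{L^2(\Domn)}^2$. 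Choosing $\eta$ so that $\eta C_2 < C_1/2$ allows absorption into the principal part, giving the claim with $C_{a_3}=C_1/2$ and $C_{a_3}' = C_1' + C_2(\eta)$.

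The main obstacle is the use of Korn's inequality on the composite domain: because $\veu\in\spH$ is not required to vanish on $\partial\Domn$ and may jump tangentially across $\Surf_\frc$, only the second (weak) version of Korn is available and produces the unavoidable $-C_{a_3}'\|\veu\|_{L^2(\Domn)}^2$ remainder. A technical point is that this works only because Korn's inequality is applied on each subdomain $\Domn_k$ separately, which is valid since the $\spH$ norm is defined piecewise and no interior continuity across $\Surf$ is required at the $H^1$ level. The remaining estimates for the gravitational terms are routine applications of Cauchy--Schwarz and Young's inequality using the uniform bounds on $\rho^0,\nabla\rho^0,\veg^0$.
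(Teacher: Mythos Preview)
Your argument is correct and follows the standard route: pointwise coercivity of $\tsLaINI$ on symmetric gradients (Assumption~\ref{ass:T0}(3)), then Korn's second inequality on each Lipschitz piece $\Domn_k$, and finally absorption of the lower-order gravitational terms via Young's inequality using the $L^\infty$ bounds on $\rho^0$, $\nabla\rho^0$, and $\veg^0$. One minor remark: the first of the three $\Sym\{\cdot\}$ terms, $(\veg^0\cdot\veu)(\veu\cdot\nabla\rho^0)$, contains no derivative of $\veu$ at all when $\vev=\veu$, so it contributes directly to the $L^2$ remainder without needing Young's inequality; your phrasing ``at most one factor of $\nabla\veu$'' already covers this, but it is worth noting explicitly.

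As for comparison with the paper: the paper does not supply a proof of its own for this lemma but simply refers to \cite[Theorem~2]{deHoop2015}. The argument there is exactly the one you outline (piecewise Korn plus absorption of the gravitational perturbation terms), so your proposal is in line with the intended proof.
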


The proof of Lemma \ref{th:coercivity} is contained in the proof of
\cite[Theorem 2]{deHoop2015}. We further define
\begin{equation}
    a_3'(\veu\,,\,\vev):=a_3(\veu\,,\,\vev)
    -C_{a_3}'\big(\veu\,,\,\vev\big)_{L^2(\Domn)}.
    \label{eq:a3p}
\end{equation}
From Lemma \ref{th:coercivity} it is clear that also
\begin{equation}
a_3'(\veu\,,\,\veu)\geq C_{a_3} \norm{\veu}_{\spH}^2
\mbox{ holds for all } \veu\in \spH.
    \label{eq:coercivity a3p}
\end{equation}
Moreover, $a_3'$ is bounded \cite[Lemma 7]{deHoop2015})

\begin{lemma}
    Suppose the assumptions introduced in Section~\ref{sec:assumption} hold,
    then the sesquilinear form $a_3'$ is bounded, that is,
    \[
        a_3'(\veu,\vew)\leq C_{a_3}^* 
        \norm{\veu}_{\spH}
        \norm{\vew}_{\spH};\quad
        \forall \veu, \vew \in \spH\,,
    \]
    for some constant $C_{a_3}^* >0$, and is Hermitian.
    \label{lm:a3p bound}
\end{lemma}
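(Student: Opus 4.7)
The plan is to verify both claims by expanding $a_3'(\veu,\vew)$ term by term using the definition \eqref{eq:a2} together with the $L^\infty$ bounds on the coefficients from Section~\ref{sec:assumption}. The Cauchy--Schwarz inequality in $L^2(\Domn)$, combined with the trivial estimate $\|\vev\|_{L^2(\Domn)} \leq \|\vev\|_{\spH}$ and $\|\nabla\vev\|_{L^2(\Domn)} \leq \|\vev\|_{\spH}$, will reduce each contribution to a constant multiple of $\|\veu\|_{\spH}\|\vew\|_{\spH}$, which will yield the desired $C_{a_3}^*$.

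First I would handle the elastic bilinear term $\int_\Domn (\tsLaINI:\nabla\veu):\nabla\vew \dd\Domn$. Since $\tsLaINI \in L^\infty(\overline\Domn)^{3\times 3\times 3\times 3}$, Cauchy--Schwarz produces a bound by $\|\tsLaINI\|_{L^\infty}\|\nabla\veu\|_{L^2(\Domn)}\|\nabla\vew\|_{L^2(\Domn)}$, which is controlled by $\|\veu\|_{\spH}\|\vew\|_{\spH}$. Next, for each of the three terms under $\Sym$, I would use $\veg^0, \rho^0 \in L^\infty(\Domn)$ and $\nabla\rho^0 \in L^\infty(\Domn)$ (via $\rho^0 \in W^{1,\infty}$) to extract $L^\infty$ constants, leaving products of $L^2$ norms of $\veu$, $\vew$, and their derivatives; these again are bounded by $\|\veu\|_{\spH}\|\vew\|_{\spH}$. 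Finally, the correction term $C_{a_3}'(\veu,\vew)_{L^2(\Domn)}$ is immediately estimated by $C_{a_3}'\|\veu\|_{L^2(\Domn)}\|\vew\|_{L^2(\Domn)} \leq C_{a_3}'\|\veu\|_{\spH}\|\vew\|_{\spH}$. Collecting the constants yields $C_{a_3}^*$.

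For the Hermitian property, the key inputs are the symmetry $\itLaINI_{ijkl} = \itLaINI_{klij}$ (established in Section~\ref{sec:assumption}) and the explicit symmetrization provided by $\Sym$. The elastic term is Hermitian because interchanging $\veu$ and $\vew$ amounts to relabeling indices $(i,j) \leftrightarrow (k,l)$ in $\itLaINI_{ijkl}\partial_j u_i\,\partial_l w_k$, which leaves the integrand invariant. The three gravitational-prestress contributions are symmetric by construction of $\Sym\{\cdot\}$. The $L^2$ inner product $(\veu,\vew)_{L^2(\Domn)}$ is trivially symmetric. Hence $a_3'(\veu,\vew) = a_3'(\vew,\veu)$.

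No real obstacle is anticipated: the result is a bookkeeping consequence of the $L^\infty$ regularity assumptions and the algebraic symmetries. The only mild care needed is to ensure that the product $(\veg^0\cdot\veu)(\vev\cdot\nabla\rho^0)$ and its siblings are handled by moving the $L^\infty$ factors outside and applying Cauchy--Schwarz to the remaining $L^2$ objects, and to verify that the symmetrization in $\Sym$ really does produce a Hermitian expression without relying on the pointwise symmetry of any non-symmetrized piece.
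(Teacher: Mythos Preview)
Your proposal is correct and complete: the boundedness follows exactly as you outline from Cauchy--Schwarz together with the $L^\infty$ bounds on $\tsLaINI$, $\rho^0$, $\nabla\rho^0$, and $\veg^0$, and the Hermitian property follows from the index symmetry $\itLaINI_{ijkl}=\itLaINI_{klij}$, the explicit symmetrization $\Sym$, and the symmetry of the $L^2$ inner product.

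The paper itself does not give a proof of this lemma; it simply cites \cite[Lemma~7]{deHoop2015}. Your argument is precisely the elementary verification one would expect that reference to contain, so there is no meaningful difference in approach to discuss---you have supplied the omitted details rather than taken an alternative route.
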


The solution $(\veu,\psi)$ depends on the small parameter $\veps$
defined in \eqref{eq:psi-eps} as well as on the viscosity coefficient
$\gamma$. We suppress these dependencies in our notation in the
further analysis.

\begin{remark}
    In the formulation of Problem \ref{pb1}, the boundary conditions
    (\ref{eq:continuous bc}), (\ref{eq:freesurf bc}) and (\ref{eq:fric
      bc}) are enforced by surface integration. Since both $\Surf\cap
    \partial \Domn$ and $\fSurf\cap (\Surf\setminus\fSurf)$ are union
    of curves with zero Lebesgue measure, discontinuities that occur on these
    curves will not appear in the analysis.
\end{remark}

\section{Nonlinear coupling iterative scheme}
\label{sec:coupling}

Here, we present a robust convergent nonlinear coupling iterative
scheme. There are several considerations that underlie the
introduction of such a scheme. First, it simplifies the stability
analysis through studying the behaviors of each of the
subproblems. Secondly, it enables acceleration of solving the system
through introducing preconditioners for each of the
subproblems. Moreover, in the time discretization, it facilitates the
use of different time steps; this is critically important, since the
ruptures and wave propagation take place on significantly different
time scales. Thirdly, we obtain a proof of well-posedness by verifying
whether the iterative coupling is a contraction.

\subsection{The iterative scheme in weak form}

The iterative scheme is described by the following

\medskip\medskip

\begin{problem}
Let $\veps$ and $\gamma$ be fixed, strictly positive constants,
    and let $(\veu^{k-1},\psi^{k-1}) \in V_1\times V_2$
    be the solution generated from the previous iteration,
    find $(\veu^{k},\psi^{k})  \in V_1\times V_2$ 
    such that for all $(\vew,\varphi)\in V_1\times V_2$,
    almost everywhere in time, the following equations
    are satisfied,
    \begin{align}
        &
    \begin{aligned}
        &
    \big\langle \rho^0\ddtt\veu^k \,,\, \vew \big\rangle_{\Domn}
    +a_3\big(\veu^k\,,\,\vew\big)
    -\frac1{4\pi \Grav} 
    \innp{ \nabla \opS(\veu^{k-1}) \,,\, \nabla \opS(\vew) }
    _{L^2(\mathbb{R}^3)}
    +\gamma\innp{\ddt\veu^k , \vew }_{\spH}
    \\ & \hspace{1cm}
    +\big\langle\bcalF\big(\bar\sigma(\veu^{k-1}),\psi^{k-1},
    \ddt\veu^{k}\big)\,,\, \vew \big\rangle
    _{\Surf_\frc}
    -\jmp{\big\langle\vetau_2(\veu^{k-1}) \,,\, \vew 
    \big\rangle_{\Surf_\frc}}
    \\ & \hspace{0.5cm}
    = \big\langle\ven\cdot(\tsT^0+\tsTd) \,,\, T_\frcpm(\vew) 
    \big\rangle_{\Surf_\frc}
    \end{aligned}
        \label{eq:step 1}
        \\&
    \big( \ddt \psi^k,\varphi \big)_{L^2(\Surf_\frc)}
    +
    \big( \opG(|T_\frcpm(\ddt\veu^k)|,\psi^k),\varphi 
    \big)_{L^2(\Surf_\frc)}
    =0,
    \label{eq:step 2}
    \end{align}
    with the initial conditions, independent of $k$,
    \begin{equation}
        \veu^k\big|_{t=0}=0,\,\ddt\veu^k\big|_{t=0}=0 \mbox{ and }
        \psi^k\big|_{t=0}=\psi^0,
        \mbox{ with } \psi^0\in L^2(\Surf_\frc).
        \label{}
    \end{equation}
    \label{pb steps}
\end{problem}

\medskip\medskip

\noindent
In short, the updated variables from iteration $k-1$ are used in
computing the Neumann boundary condition for iteration $k$ until
convergence. In the next subsection, we give a convergence proof that
involves a bound on the viscosity coefficient, $\gamma$, in terms of
the material parameters and the trace constant.

\begin{remark}
    Following classical techniques (\eg Martins and Oden
    \cite{Martins1987}), the existence of a solution for the decoupled
    system is clear. In other words, for given
    $(\veu^{k-1},\psi^{k-1}) \linebreak \in V_1\times V_2$, the
    existence of a solution $\veu^k\in V_1$ to \eqref{eq:step 1}
    holds, and, meanwhile, given $\veu^k\in V_1$, there exists a
    solution $\psi^{k} \in V_2$ to \eqref{eq:step 2} due to the
    Lipschitz continuous right-hand-side of the ODE.
\end{remark}

\subsection{Convergence}
\label{sec:conv}

We show that the iterative coupling scheme described by Problem
\ref{pb steps} is linearly convergent, by a given convergence rate
$\lambda\in(0,1)$, within the space $V_1\times V_2$ for any finite
time interval $[0,T]$ under certain conditions.

\begin{theorem}
Let the maximal time $T$ and the coefficient $\gamma$ satisfy
\begin{equation}
    \begin{split}
    &
\frac1{\beta(T)}
\geq \max \left(
\frac{C_{\opF,\psi}^{\star\,2}}{\lambda C_\veps C_{\opF,s}}
+\left(\frac{C_{\opG,s}^{\star\,2}}{C_\veps C_{\opF,s}}
-2C_{\opG,\psi}\right)
\,,\,
\left(\frac{C_S C_{\rho^0}^{\star}}{4\pi \Grav }+C_{a_3}'
\right)\bigg/C_{\rho^0}
\,\right),
\\&
   \gamma \geq 2 \beta(T)
          (( (C_I + C'_I) C_{\opF,\sigma}^{\star})^2
     + C_I^{'\,2} ) / (\lambda (C_{a_3} - \beta(T) C_{a_3}'))
\end{split}
\label{eq:contr cond}
\end{equation}
with $\beta(T):\mathbb{R}^+\to\mathbb{R}^+$ a monotonically increasing
function of $T$, and $\lambda \in (0, 1)$ some constant.  Our
iterative coupling scheme described by Problem \ref{pb steps} is a
contraction in the sense that
\begin{equation}
\begin{aligned}
    &
  \kappa_1\norm{\ddt\veer_\veu^k}^{2}_{L^\infty([0,T];L^2(\Domn))}
+ \kappa_2\norm{\veer_\veu    ^k}^{2}_{L^\infty([0,T];\spH)}
+ \kappa_3\norm{\scer_\psi    ^k}^{2}_{L^\infty([0,T];L^2(\Surf_\frc))}
+ \kappa_4\norm{\ddt\veer_\veu^k}^{2}_{L^2([0,T];\spH)}
\\&\hspace{1cm}
\leq \lambda\Big(
  \kappa_1\norm{\ddt\veer_\veu^{k-1}}^{2}_{L^\infty([0,T];L^2(\Domn))}
+ \kappa_2\norm{\veer_\veu    ^{k-1}}^{2}_{L^\infty([0,T];\spH)}
+ \kappa_3\norm{\scer_\psi    ^{k-1}}^{2}_{L^\infty([0,T];L^2(\Surf_\frc))}
\Big),
\end{aligned}
\label{eq:contraction 1}
\end{equation}
where
$
\veer_\veu^k := \veu^k-\veu^{k-1}, \,
\scer_\psi^k := \psi^k-\psi^{k-1},
$
and
\begin{equation}
    \begin{aligned}
        &
    \kappa_1=C_{\rho^0}
    -\left(C_{a_3}'+\frac{C_S C_{\rho^0}^{\star}}{4\pi \Grav}\right)
    \beta(T)
    ,\quad
    \kappa_2=C_{a_3}-C_{a_3}'\beta(T),
    \\&
    \kappa_3=
    1-\left(\frac{C_{\opG,s}^{\star\,2}}{C_\veps C_{\opF,s}}
    -2C_{\opG,\psi}\right)\beta(T),
    \\&
    \mbox{and }
    \kappa_4>0 \mbox{ some constant depending on }T\mbox{ and }\gamma.
    \end{aligned}
    \label{eq:contraction 0}
\end{equation}
\label{th1}
\end{theorem}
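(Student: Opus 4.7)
The plan is to derive variational identities for the iteration errors $\veer_\veu^k = \veu^k-\veu^{k-1}$ and $\scer_\psi^k=\psi^k-\psi^{k-1}$ by subtracting \eqref{eq:step 1}--\eqref{eq:step 2} at levels $k$ and $k-1$, then to run energy estimates by testing the resulting wave-type identity against $\vew = \ddt\veer_\veu^k$ and the state ODE against $\varphi = \scer_\psi^k$. Integrating from $0$ to $t\le T$, the $\rho^0\ddtt\veer_\veu^k$ term produces $\tfrac12\|\ddt\veer_\veu^k(t)\|_{L^2(\Domn;\rho^0)}^2$; the form $a_3(\veer_\veu^k,\ddt\veer_\veu^k)$ becomes $\tfrac12\tfrac{d}{dt} a_3(\veer_\veu^k,\veer_\veu^k)$ since $a_3$ is Hermitian (Lemma \ref{lm:a3p bound}), and rewriting it through $a_3'$ in \eqref{eq:a3p} and invoking the coercivity \eqref{eq:coercivity a3p} yields $\tfrac{C_{a_3}}{2}\|\veer_\veu^k\|_{\spH}^2$ at the cost of a residual $C_{a_3}'\|\veer_\veu^k\|_{L^2(\Domn)}^2$. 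The viscous term delivers the dissipation $\gamma\|\ddt\veer_\veu^k\|_{L^2([0,t];\spH)}^2$, and the ODE test produces $\tfrac12\|\scer_\psi^k(t)\|_{L^2(\Surf_\frc)}^2$.

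Three families of ``non-coercive'' contributions must then be controlled on the right-hand side. The gravitational term $-\tfrac{1}{4\pi \Grav}(\nabla\opS(\veer_\veu^{k-1}),\nabla\opS(\ddt\veer_\veu^k))_{L^2(\mathbb{R}^3)}$ is bounded, via elliptic regularity of \eqref{eq:grav pot ptb}, by a constant $C_S C_{\rho^0}^\star/(4\pi\Grav)$ times $\|\veer_\veu^{k-1}\|_{L^2(\Domn)}\|\ddt\veer_\veu^k\|_{L^2(\Domn)}$. The jump term involving $\vetau_2(\veer_\veu^{k-1})$ is handled by Lemma \ref{lm} with constant $C'_I$. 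The most delicate contribution is the friction increment
\[
\big\langle \bcalF(\bar\sigma(\veu^{k-1}),\psi^{k-1},\ddt\veu^k) - \bcalF(\bar\sigma(\veu^{k-2}),\psi^{k-2},\ddt\veu^{k-1}),\, \ddt\veer_\veu^k\big\rangle_{\Surf_\frc}:
\]
splitting telescopically, the variation in $\ddt\veu$ produces, using Assumption \ref{ass:Coulomb law} and the convexity of $\sPsi$ in \eqref{eq:psi-eps}, a monotone contribution bounded below by $(C_{\opF,s}/C_\veps)\,\|T_\frcpm(\ddt\veer_\veu^k)\|_{L^2(\Surf_\frc)}^2$ that one moves to the left-hand side; the variations in $\sigma$ and $\psi$ generate, respectively, terms of order $C_{\opF,\sigma}^\star(C_I+C'_I)\|\veer_\veu^{k-1}\|_\spH$ (using \eqref{eq:def sigma}, \eqref{eq:def bsigma}, and Lemma \ref{lm}) and of order $C_{\opF,\psi}^\star\|\scer_\psi^{k-1}\|_{L^2(\Surf_\frc)}$, each paired with $\|\ddt\veer_\veu^k\|_\spH$. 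The ODE test yields the monotonicity term $C_{\opG,\psi}\|\scer_\psi^k\|_{L^2(\Surf_\frc)}^2$ on the left and a cross term of order $C_{\opG,s}^\star\|T_\frcpm(\ddt\veer_\veu^k)\|_{L^2(\Surf_\frc)}\|\scer_\psi^k\|_{L^2(\Surf_\frc)}$, which will be absorbed using the just-obtained friction monotonicity.

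The final step applies Young's inequality with carefully tuned weights. The $\sigma$- and $\vetau_2$-cross terms are absorbed into a fraction of $\gamma\|\ddt\veer_\veu^k\|_{L^2([0,t];\spH)}^2$, producing the lower bound on $\gamma$ in the second line of \eqref{eq:contr cond}. The $\opG$-cross term is absorbed by the friction monotonicity, which yields the grouping $C_{\opG,s}^{\star\,2}/(C_\veps C_{\opF,s}) - 2C_{\opG,\psi}$ visible in $\kappa_3$; the $\opF$-$\psi$ cross term is absorbed analogously, leaving the $C_{\opF,\psi}^{\star\,2}/(\lambda C_\veps C_{\opF,s})$ piece. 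The remaining residuals $C_{a_3}'\|\veer_\veu^k\|_{L^2}^2$ and the gravitational term are bounded through $\int_0^t\|\cdot\|^2\,\dd\tau\le T\sup_{[0,T]}\|\cdot\|^2$; collecting these factors into the monotone function $\beta(T)$, the first condition in \eqref{eq:contr cond} guarantees that $\kappa_1,\kappa_2,\kappa_3$ in \eqref{eq:contraction 0} are strictly positive, and taking the supremum over $t\in[0,T]$ yields \eqref{eq:contraction 1} with the stated $\lambda$. I expect the main obstacle to be the friction-monotonicity bookkeeping: the constant $C_\veps$ does not come from Lipschitz continuity of $\opF$ but from the regularization \eqref{eq:psi-eps}--\eqref{eq:D-eps}, so one must exploit the convexity of $\sPsi$ rather than just its Lipschitz property to obtain a genuinely coercive lower bound in the $s$-direction, which is ultimately what makes the viscous dissipation $\gamma$ sufficient to close the estimate.
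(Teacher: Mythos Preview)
Your overall strategy---subtract consecutive iterates, test the wave identity with $\ddt\veer_\veu^k$ and the ODE with $\scer_\psi^k$, integrate in time, and absorb the $k$-level residuals via a Gronwall-type step encoded in $\beta(T)$---matches the paper's proof. The handling of $I_0,I_1,I_3$ (in the paper's notation) and of the ODE cross term is exactly as you describe.

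The one place your plan does not yet close is the friction term. A telescopic split plus convexity of $\sPsi$ gives only nonnegativity: $\bPsi=\nabla\sPsi$ is monotone but $\sPsi$ is \emph{not} strongly convex (the smallest eigenvalue of its Hessian is $\veps^2/(|\vev|^2+\veps^2)^{3/2}\to 0$), so you cannot extract a coercive lower bound of the form $c\,\|T_\frcpm(\ddt\veer_\veu^k)\|^2$ this way. The paper instead writes $-\langle\veer_{\bcalF}^k,\ddt\veer_\veu^k\rangle$ out directly and uses the elementary Cauchy--Schwarz bound $\vei\cdot\vej+\veps^2\le\sqrt{(|\vei|^2+\veps^2)(|\vej|^2+\veps^2)}$ to reduce the whole expression to $-\int_{\Surf_\frc}\scer_{\opF}^k\,\scer_s^{\veps,k}\,\dd\Surf$, where $\scer_s^{\veps,k}=\sqrt{|T_\frcpm\ddt\veu^k|^2+\veps^2}-\sqrt{|T_\frcpm\ddt\veu^{k-1}|^2+\veps^2}$. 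The coercive piece then comes from $\partial_s\opF\ge C_{\opF,s}>0$ together with $C_\veps|\scer_s^k|\le|\scer_s^{\veps,k}|\le|\scer_s^k|$, yielding $-C_\veps C_{\opF,s}\|\scer_s^k\|_{L^2(\Surf_\frc)}^2$ (note: the constant is $C_\veps C_{\opF,s}$, not $C_{\opF,s}/C_\veps$, and the controlled quantity is the scalar magnitude difference $\scer_s^k=|T_\frcpm\ddt\veu^k|-|T_\frcpm\ddt\veu^{k-1}|$, not the vector jump). This is enough because the $\psi$- and $\opG$-cross terms are also paired with $|\scer_s^k|$ after Young, while the $\sigma$-cross term is absorbed by the viscous dissipation via Lemma~\ref{lm}, exactly as you anticipated. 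Finally, the paper invokes Gronwall rather than the crude $\int_0^t\le T\sup$ bound to produce $\beta(T)$; either device works here.
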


\begin{proof}
We define the error vectors and scalars,
\[
    \begin{split}
        &
\veer_{\bcalF}^k :=
    \bcalF\big(\bar\sigma(\veu^{k-1})\,,\,\psi^{k-1},\ddt\veu^{k}\big)-
    \bcalF\big(\bar\sigma(\veu^{k-2})\,,\,\psi^{k-2},\ddt\veu^{k-1}\big), 
\\&
\scer_{\opF}^k :=
    \opF\big(\bar\sigma(\veu^{k-1})\,,\,|T_\frcpm(\ddt\veu^{k})|  \,,\,
    \psi^{k-1}\big)-
    \opF\big(\bar\sigma(\veu^{k-2})\,,\,|T_\frcpm(\ddt\veu^{k-1})|\,,\,
    \psi^{k-2}\big) , 
    \\&
\scer_{\opG}^k :=
    \opG\big(|T_\frcpm(\ddt\veu^{k})|  \,,\,\psi^{k}\big)-
    \opG\big(|T_\frcpm(\ddt\veu^{k-1})|\,,\,\psi^{k-1}\big) \, , \quad
    \\ &
\scer_{s}^k :=
    |T_\frcpm(\ddt\veu^{k})|- |T_\frcpm(\ddt\veu^{k-1})| \,,\quad
\scer_{s}^{\veps,k} :=
    \sqrt{|T_\frcpm(\ddt\veu^k)|^2+\veps^2}-
    \sqrt{|T_\frcpm(\ddt\veu^{k-1})|^2+\veps^2}
    .
\end{split}
\]
It is immediate that 
\begin{equation}
\big|\scer_s^k\big|\leq 
\big|T_\frcpm(\ddt\veer_\veu^k)\big|
=\big|T_{\frc^\pp}(\ddt\veer_\veu^k)_\TT
-T_{\frc^\mm}(\ddt\veer_\veu^k)_\TT\big|
\leq\big|T_{\frc^\pp}(\ddt\veer_\veu^k)\big|
+\big|T_{\frc^\mm}(\ddt\veer_\veu^k)\big|,
    \label{eq:scer s}
\end{equation}
which, using (\ref{eq:trace ineq}), gives
\begin{equation}
    \norm{\scer_s^k}_{L^2(\fSurf)}^2\leq 
    \norm{T_\frcpm(\ddt\veer_\veu^k)}_{L^2(\fSurf)}^2
    \leq\norm{T_\frc(\ddt\veer_\veu^k)}_{L^2(\fSurf)}^2
    \leq C_\frc \norm{\ddt\veer_\veu^k}_{\spH}^2.
    \label{eq:scer s norm}
\end{equation}
It is clear that $\scer_s^k\,\scer_s^{\veps,k}\geq 0$.  
Subtracting iteration $k$ from iteration $k-1$ of (\ref{eq:step 1})
for $k\geq 2$ yields,
\begin{align}
    &
    \begin{aligned}
        &
    \big\langle \rho^0\ddtt\veer_\veu^k \,,\, \vew \big\rangle_{\Domn}
    \hspace{-1mm}
    +a_3'(\veer_\veu^k , \vew)
    -C_{a_3}'\big(\veer_\veu^k , \vew\big)_{L^2(\Domn)}
    -\frac1{4\pi \Grav}\big( \nabla \opS(\veer_\veu^{k-1}) , 
    \nabla \opS(\vew) \big)_{L^2(\mathbb{R}^3)}
    \\ & \hspace{1cm}
    +\gamma \big(\ddt\veer_\veu^k,\vew
    \big)_{\spH}
    +\big\langle\veer_{\bcalF}^{k} \,,\, \vew 
    \big\rangle_{\Surf_\frc}
    -\jmp{\big\langle\vetau_2\big(\veer_{\veu}^{k-1}\big)
    \,,\, \vew \big\rangle_{\Surf_\frc}}
    =0,
    \end{aligned}
    \label{eq:err weak}
\end{align}
where $a_3'$ is defined in \eqref{eq:a3p}.
We let $\vew=\ddt\veer_\veu^{k}$
so that (\ref{eq:err weak}) implies,
\begin{align}
    &
    \begin{aligned}
        & \frac{1}{2} \pddt \Big(
    \norm{\ddt\veer_\veu^k }^2_{L^2(\Domn;\rho^0)}
    +a_3'(\veer_\veu^k,\veer_\veu^k)\Big)
    +\gamma \norm{\ddt\veer_\veu^k }^2_{\spH}
    \\ & \hspace{0.5cm}
    =
    C_{a_3}'\big(\veer_\veu^k , \ddt\veer_{\veu}^k\big)_{L^2{\Domn}}
    +\frac1{4\pi \Grav}\big(
    \nabla \opS(\veer_\veu^{k-1}) \,,\,
    \nabla \opS(\ddt\veer_\veu^k)
    \big)_{L^2(\mathbb{R}^3))}
    \\ & \hspace{1.0cm}
    -\big\langle\veer_{\bcalF}^{k} \,,\,\ddt\veer_\veu^k 
    \big\rangle_{\Surf_\frc}
    +\jmp{\big\langle \vetau_2\big(\veer_{\veu}^{k-1}\big)
    \,,\, \ddt\veer_\veu^k
    \big\rangle_{\Surf_\frc}} .
    \end{aligned}
    \label{eq:err weak 1}
\end{align}
We denote by $I_0,I_1,I_2$ and $I_3$ the three terms on the right-hand
side of (\ref{eq:err weak 1}). From Young's inequality, it follows
that
\begin{equation}
    I_0\leq \frac{C_{a_3}'}2\Big(
    \norm{\veer_\veu^k}_{L^2(\Domn)}^2
    +\norm{\ddt\veer_\veu^k}_{L^2(\Domn)}^2
    \Big).
    \label{eq:I0}
\end{equation}
Based on \cite[p.\ 28 proof of Theorem 2]{deHoop2015}, we have
\[
    \norm{\nabla \opS(\veu)}_{L^2(\mathbb{R}^3)}^2 \leq 
    C_S\norm{\veu}_{L^2(\Domn;\rho^0)}^2,
\]
so that
\begin{equation}
    \begin{split}
        I_1 \leq & 
        \frac1{8\pi \Grav}\Big({\delta_1}
        \norm{\nabla \opS(\veer_\veu^{k-1})}_{L^2(\mathbb{R}^3)}^2
        +\frac1{\delta_1}
        \norm{\nabla \opS(\ddt\veer_\veu^{k})}_{L^2(\mathbb{R}^3)}^2
        \Big)
        \\ \leq &
        \frac{C_S C_{\rho^0}^\star}{8\pi \Grav}\Big({\delta_1}
        \norm{\veer_\veu^{k-1}}_{L^2(\Domn)}^2
        +\frac1{\delta_1}
        \norm{\ddt\veer_\veu^{k}}_{L^2(\Domn)}^2
        \Big).
    \end{split}
    \label{eq:step1 err1}
\end{equation}
Meanwhile,
\begin{equation}
    \begin{split}
        I_2 = & -\int_{\Surf_\frc} \left(
        \opF(\bar\sigma(\veu^{k-1}),|T_\frcpm(\ddt\veu^{k})|,
        \psi^{k-1})
        \left(\frac{|T_\frcpm(\ddt\veu^k)|^2-
        T_\frcpm(\ddt\veu^k)\cdot T_\frcpm(\ddt\veu^{k-1})}
        {\sqrt{|T_\frcpm(\ddt\veu^k)|^2+\veps^2}}\right)\right.
        \\ & \hspace{0.5cm}
        \left.
        +\opF(\bar\sigma(\veu^{k-2}),|T_\frcpm(\ddt\veu^{k-1})|,
        \psi^{k-2})
        \left(\frac{|T_\frcpm(\ddt\veu^{k-1})|^2-
        T_\frcpm(\ddt\veu^k)\cdot T_\frcpm(\ddt\veu^{k-1})}
        {\sqrt{|T_\frcpm(\ddt\veu^{k-1})|^2+\veps^2}}\right)
        \right)\dd\Surf.
    \end{split}
\end{equation} 
To simplify the notation in the algebraic manipulations, we let 
\[
f_1=\opF\big(\bar\sigma(\veu^{k-1}),|T_\frcpm(\ddt\veu^{k})|,
\psi^{k-1}\big),\quad
f_2=\opF\big(\bar\sigma(\veu^{k-2}),|T_\frcpm(\ddt\veu^{k-1})|,
\psi^{k-2}\big)
\]
and
\[
   \vei=T_\frcpm(\ddt\veu^k),\quad
   \vej=T_\frcpm(\ddt\veu^{k-1}), 
\]
whence
\[
I_2=\int_{\Surf_\frc} \left(
f_1\frac{-|\vei|^2+\vei\cdot\vej}
{\sqrt{|\vei|^2+\veps^2}}
+f_2\frac{-|\vej|^2+\vei\cdot\vej}
{\sqrt{|\vej|^2+\veps^2}}\right)
\dd\Surf.
\]
Using the Cauchy-Schwartz inequality,
\[
    \vei\cdot\vej+\veps^2\leq
    \sqrt{(|\vei|^2+\veps^2)(|\vej|^2+\veps^2)},
\]
and it follows that
\begin{equation}
    \begin{split}
        &
f_1\frac{-|\vei|^2+\vei\cdot\vej}
{\sqrt{|\vei|^2+\veps^2}}
+f_2\frac{-|\vej|^2+\vei\cdot\vej}
{\sqrt{|\vej|^2+\veps^2}}
\\&\hspace{1cm}=
f_1\left(-\sqrt{|\vei|^2+\veps^2}
+\frac{\vei\cdot\vej+\veps^2}
{\sqrt{|\vei|^2+\veps^2}}\right)
+f_2\left(-\sqrt{|\vej|^2+\veps^2}
+\frac{\vei\cdot\vej+\veps^2}
{\sqrt{|\vej|^2+\veps^2}}\right)
\\&\hspace{1cm}\leq
(f_1-f_2)(-\sqrt{|\vei|^2+\veps^2}
+\sqrt{|\vej|^2+\veps^2}).
    \end{split}
\end{equation}
We note that
\[
    \big| 
    \sqrt{|\vei|^2+\veps^2}
    -\sqrt{|\vej|^2+\veps^2}\big|
    \leq
    \big| |\vei|-|\vej| \big|,
\]
with the difference going to $0$ uniformly as $\veps$ vanishes.
Hence, $C_\veps|\scer_s^k|\leq |\scer_s^{\veps,k}|\leq
|\scer_s^k|$, with the positive constant $C_\veps \to 1$ as
$\veps \to 0$. Therefore, with the Lipschitz continuity of
$\opF$ expressed by (\ref{eq:bnd F}),
\begin{equation}
    \begin{split}
        I_2  \leq &
    \Cblu
        \int_{\Surf_\frc} \Big(
        \opF\big(\bar\sigma(\veu^{k-1}),|T_\frcpm(\ddt\veu^{k})|,
        \psi^{k-1}\big)
        -\opF\big(\bar\sigma(\veu^{k-2}),|T_\frcpm(\ddt\veu^{k-1})|,
        \psi^{k-2}\big)\Big)
        \\ & \hspace{2cm}
    \Cblu
        \left(\sqrt{|T_\frcpm(\ddt\veu^{k-1})|^2+\veps^2}-
        \sqrt{|T_\frcpm(\ddt\veu^k)|^2+\veps^2}
        \right)\dd\Surf
    \\ = &
        -\int_{\Surf_\frc} 
        \scer_{\opF}^k\,\scer_s^{\veps,k} \dd\Surf
        \approx
        -\int_{\Surf_\frc} \left(
        \pdd{\opF}s\scer_s^k\,\scer_s^{\veps,k} 
        +\pdd{\opF}\sigma \bar\sigma(\veer_\veu^{k-1})\,
        \scer_s^{\veps,k}
        +\pdd{\opF}\psi \scer_\psi^{k-1}\,\scer_s^{\veps,k}
        \right)\dd\Surf
    \\ \leq &
        \int_{\Surf_\frc} \left(
        -C_{\opF,s}C_\veps|\scer_s^k|^2
        +C_{\opF,\sigma}^\star|\bar\sigma(\veer_\veu^{k-1})||\scer_s^k| 
        +C_{\opF,\psi}^\star|\scer_\psi^{k-1}||\scer_s^k| \right)
        \dd\Surf
    \\ \leq &
        - C_{\opF,s}C_\veps\norm{\scer_s^k}_{L^2(\Surf_\frc)}^2
        + C_{\opF,\sigma}^\star\big\langle
        |\bar\sigma(\veer_\veu^{k-1})|
        \,,\, |\scer_s^k|\big\rangle_{\Surf_\frc}
        + C_{\opF,\psi}^\star\big(|\scer_\psi^{k-1}|
        \,,\, |\scer_s^k|\big)_{L^2(\Surf_\frc)}.
    \end{split}
    \label{eq:step1 err2 1}
\end{equation}
Using Lemma~\ref{lm} and then Young's inequality, we obtain
\begin{equation}
    \begin{split}
        &
    \big\langle |\bar\sigma(\veer_\veu^{k-1})|
    \,,\, |\scer_s^k|\big\rangle_{\Surf_\frc}
    =
    \big\langle\big|\ven\cdot\big(
    \itLa_{\tsLaINI,\rho^0,\veg^0}
    +\itLa'_{\tsLaINI,\rho^0,\veg^0}\big)
    \circ T_\frc \,(\veer_\veu^k)\big|\,,\,
    \big|T_\frcpm (\ddt\veer_\veu^k)\big|\big\rangle_{\Surf_\frc}
    \\&\hspace{1cm}
    \leq (C_I+C'_I)\left(\frac1{2\delta_2}
    \norm{\veer_\veu^{k-1}}_{\spH}^2
    +\frac{\delta_2}2\norm{\ddt\veer_\veu^k}_{\spH}^2
    \right).
    \end{split}
    \label{eq:err sigma s}
\end{equation}
Using the Cauchy-Schwarz and Young's inequalities, again,
\begin{equation}
    \begin{split}
    \big(|\scer_\psi^{k-1}|
    \,,\, |\scer_s^k|\big)_{L^2(\Surf_\frc)}
    \leq& \frac1{2\delta_3}
    \norm{\scer_\psi^{k-1}}_{L^2(\Surf_\frc)}^2
    +\frac{\delta_3}2\norm{\scer_s^k}
    _{L^2(\Surf_\frc)}^2.
    \end{split}
    \label{eq:err tau2 s}
\end{equation}
The estimates leading to (\ref{eq:err sigma s}) also lead to
\begin{equation}
    \begin{split}
        &
        I_3= \jmp{\big\langle\vetau_2\big(\veer_{\veu}^{k-1}\big)
        \,,\, \ddt\veer_\veu^k\big\rangle_{\Surf_\frc}}
        \leq \sum_{\pp,\mm}
        \big| \big\langle\itLa'_{\tsLaINI,\rho^0,\veg^0}
        \circ T_\frc \,(\veer_\veu^k)\,,\,
        T_\frc (\ddt\veer_\veu^k)\big\rangle_{\Surf_{\frc^\pm}}\big|
        \\&\hspace{1cm}
        \leq C'_I\left(\frac1{2\delta_4}
        \norm{\veer_\veu^{k-1}}_{\spH}^2
        +\frac{\delta_4}2\norm{\ddt\veer_\veu^k}_{\spH}^2
        \right).
    \end{split}
    \label{eq:err I3}
\end{equation}
We subtract (\ref{eq:weakform ODE}) from (\ref{eq:step 2}) at step
$k$, and let $\varphi=\scer_\psi^{k}$ so that 
\begin{equation}
    \begin{split}
        &
    \frac{1}{2} \pddt\norm{\scer_\psi^{k}}_{L^2(\Surf_\frc)}^2
    =-\big(\scer_{\opG}^k \,,\, \scer_\psi^{k} \big)_{L^2(\Surf_\frc)}
    \leq
    \int_{\Surf_\frc}\left(C_{\opG,s}^\star|\scer_s^{k}|
    -C_{\opG,\psi}|\scer_\psi^{k}| \right)|\scer_\psi^{k}|\dd\Surf
    \\& \hspace{1cm}
    \leq \frac{C_{\opG,s}^\star}2\left(
    \frac1{\delta_5}\norm{\scer_\psi^{k}}_{L^2(\Surf_\frc)}^2
    +\delta_5\norm{\scer_s^k}_{L^2(\Surf_\frc)}^2
    \right)
    -C_{\opG,\psi}\norm{\scer_\psi^k}_{L^2(\Surf_\frc)}^2,
    \end{split}
    \label{eq:err weak ODE}
\end{equation}
in which, based on (\ref{eq:bnd G}), 
\begin{equation}
    \begin{split}
        &
    |\scer_{\opG}^k\,\scer_\psi^k| \approx 
    \left|\pdd{\opG}{s}\scer_s^k\,\scer_\psi^k 
    +\pdd{\opG}{\psi}|\scer_\psi^k|^2 \right|
    \geq 
    -\left|\pdd{\opG}{s}\scer_s^k\,\scer_\psi^k\right| 
    +\pdd{\opG}{\psi} |\scer_\psi^k|^2 
    \geq 
    -C_{\opG,s}^\star |\scer_s^k||\scer_\psi^k| 
    + C_{\opG,\psi}|\scer_\psi^k|^2.
    \end{split}
    \label{eq:step2 err}
\end{equation}
Combining (\ref{eq:err weak 1})-(\ref{eq:err weak ODE}), we get the
estimate
\begin{equation}
    \begin{split}
     & \pddt \Big(
     C_{\rho^0}\norm{\ddt\veer_\veu^k }^2_{L^2(\Domn)}
    +a_3'\big(\veer_\veu^k,\veer_\veu^k\big)
    +\norm{\scer_\psi^{k}}_{L^2(\Surf_\frc)}^2
    \Big)
    \\ & \hspace{0.5cm}
    \leq
    \frac{C_S C_{\rho^0}^{\star}\delta_1}{4\pi \Grav}
    \norm{\veer_\veu^{k-1}}_{L^2(\Domn)}^2
    +\left(
    \frac{C_I+C'_I}{\delta_2}C_{\opF,\sigma}^\star 
    +\frac{C'_I}{\delta_4} \right)
    \norm{\veer_\veu^{k-1}}_{\spH}^2
    + \frac{C_{\opF,\psi}^\star}{\delta_3}
    \norm{\scer_\psi^{k-1}}_{L^2(\Surf_\frc)}^2
    \\
& \hspace{1.0cm}
    +C_{a_3}'
    \norm{\veer_\veu^k}_{L^2(\Domn)}^2
    +\left(C_{a_3}'
    +\frac{C_S C_{\rho^0}^{\star}}{4\pi \Grav\delta_1}\right)
    \norm{\ddt\veer_\veu^{k}}_{L^2(\Domn)}^2
    + \left(\frac{C_{\opG,s}^\star}{\delta_5} -2C_{\opG,\psi}\right)
    \norm{\scer_\psi^k}_{L^2(\Surf_\frc)}^2
    \\
& \hspace{1.0cm}
    +\big((C_I+C'_I)C_{\opF,\sigma}^\star \delta_2
    +C'_I\delta_4 -2\gamma \big)
    \norm{\ddt\veer_\veu^k }^2_{\spH}
    \\[0.25cm]
& \hspace{1.0cm}
    +\big(C_{\opF,\psi}^\star\delta_3
    +C_{\opG,s}^\star\delta_5 - 2 C_\veps C_{\opF,s}\big)
    \norm{\scer_s^k}_{L^2(\Surf_\frc)}^2 .
    \end{split}
    \label{eq:err est}
\end{equation}
We let $\delta_1=1$, 
$(C_I+C'_I) C_{\opF,\sigma}^\star \delta_2=C'_I\delta_4=\tfrac\gamma2$
and 
$C_{\opF,\psi}^\star\delta_3=C_{\opG,s}^\star\delta_5=
C_\veps C_{\opF,s}$,
integrate (\ref{eq:err est}) over $[0,t]$ with $t \leq T$, 
and take \eqref{eq:coercivity a3} into account. Then
\begin{equation}
    \begin{split}
    & 
    C_{\rho^0}\norm{  \ddt\veer_\veu^k }^2_{L^2(\Domn)}
    +C_{a_3}\norm{\veer_\veu^k}^2_{\spH}
    +\norm{\scer_\psi^{k}}_{L^2(\Surf_\frc)}^2
    +\gamma\int_0^t \norm{\ddt\veer_\veu^k}_{\spH}^2\dd\tau
    \\[3mm] & \hspace{0.5cm}
    \leq \int_0^t \Bigg(
    \frac{C_S C_{\rho^0}^{\star}}{4\pi \Grav}
    \norm{\veer_\veu^{k-1}}
    _{L^2(\Domn)}^2
    +\frac2{\gamma}
    ( ((C_I + C'_I) C_{\opF,\sigma}^{\star})^2
      + C_I^{'\,2}) 
    \norm{\veer_\veu^{k-1}}
    _{\spH}^2
    \\[1mm] & \hspace{1.0cm}
    + \frac{C_{\opF,\psi}^{\star\,2}}{C_\veps C_{\opF,s}}
    \norm{\scer_\psi^{k-1}}
    _{L^2(\Surf_\frc)}^2
    +C_{a_3}'
    \norm{\veer_\veu^k}_{L^2(\Domn)}^2
    +\left(C_{a_3}'
    +\frac{C_S C_{\rho^0}^{\star}}{4\pi \Grav}\right)
    \norm{\ddt\veer_\veu^{k}}_{L^2(\Domn)}^2
    \\[1mm] & \hspace{1.0cm}
    + \left(\frac{C_{\opG,s}^{\star\,2}}
    {C_\veps C_{\opF,s}}
    -2C_{\opG,\psi}\right)
    \norm{\scer_\psi^k}
    _{L^2(\Surf_\frc)}^2
    \Bigg)\dd\tau.
    \end{split}
    \label{eq:err est1}
\end{equation}
Applying Gronwall's lemma, \eqref{eq:err est1} results in
\begin{equation}
    \begin{split}
    & 
    \left(C_{\rho^0}-C_{a_3}'\beta(T)
    -\frac{C_S C_{\rho^0}^{\star}\beta(T)}{4\pi \Grav}\right)
    \norm{\ddt\veer_\veu^k }
    ^2_{L^\infty([0,T];L^2(\Domn))}
    +\left(C_{a_3}-\beta(T) C_{a_3}'\right)\norm{\veer_\veu^k}
    ^2_{L^\infty([0,T];\spH)}
    \\ & \hspace{1cm}
    +\,\gamma\, C_T 
    \norm{\ddt\veer_\veu^k}^2_{L^2([0,T];\spH)}
    +\left(1-\bigg(\frac{C_{\opG,s}^{\star\,2}}
    {C_\veps C_{\opF,s}}
    -2C_{\opG,\psi}\bigg)\beta(T)\right)
    \norm{\scer_\psi^{k}}
    ^2_{L^\infty([0,T];L^2(\Surf_\frc))}
    \\[2mm] & \hspace{0.0cm}
    \leq
    \frac{C_S C_{\rho^0}^{\star}\beta(T)}{4\pi \Grav}
    \norm{\veer_\veu^{k-1}}
    ^2_{L^\infty([0,T];L^2(\Domn))}
    +\frac{2\beta(T)}{\gamma}
    \big(( (C_I+C'_I)C_{\opF,\sigma}^{\star})^2
    +C_I^{'\,2}\big) 
    \norm{\veer_\veu^{k-1}}
    ^2_{L^\infty([0,T];\spH)}
    \\[2mm] & \hspace{1cm}
    + \frac{C_{\opF,\psi}^{\star\,2}\beta(T)}{C_\veps C_{\opF,s}}
    \norm{\scer_\psi^{k-1}}
    ^2_{L^\infty([0,T];L^2(\Surf_\frc))},
    \end{split}
    \label{eq:err est2}
\end{equation}
where $\beta(T) > 0$ is a constant monotonically increasing with the
length of the time interval $[0,T]$. Clearly, the contraction
\eqref{eq:contraction 1} is a direct consequence of \eqref{eq:err
  est2} if the criteria in (\ref{eq:contr cond}) are satisfied. We,
therefore, obtain a contraction within $ V_1 \times V_2$.
\end{proof}

\begin{remark}
    To properly control the error, the time interval $[0,T]$ must be
    sufficiently small such that $\beta(T)$ satisfies the condition in
    \eqref{eq:contr cond}.  For a long-time simulation, the overall
    time is subdivided into sufficiently small time intervals, namely,
    \[
        [0,\Dt],[\Dt,2\Dt],[2\Dt,3\Dt],\cdots,[(N-1)\Dt,N\Dt],
        \quad \Dt:=T/N,
    \]
    and iterations are conducted within each time segment. In this
    way, a sufficiently small $\beta(\Dt)$ can be used alternatively
    in Theorem \ref{th1}.
\end{remark}

\begin{remark}
    As is clear from \eqref{eq:contr cond}, a vanishing $\gamma$
    destroys the contraction. This has been apparent in computational
    experiments: A zero $\gamma$ results in severe oscillations
    \cite{Martins1987} while a positive $\gamma$ aids the stability of
    numerical simultions. Indeed, it is necessary to make sure that
    $\gamma$ takes a strictly positive value to be able to apply the
    Kelvin-Voigt regularization.
\end{remark}

\section{Existence of a weak solution}
\label{sec:exist}

Here, we follow the method of proof by Martins and Oden (1987)
\cite{Martins1987}. We immediately obtain a corollary as a
  consequence of the contraction of the sequence shown in
  \eqref{eq:contraction 1}, and then a theorem that guarantees the
  existence of a solution to Problem \ref{pb1}.

\begin{corollary}
    Suppose $(\veu^k,\psi^k)\in V_1\times V_2$ are sequence of 
    solutions to the scheme described by Problem \ref{pb steps},
    then there exists $(\veu,\psi)\in V_1 \times V_2$ such that
    \[
        \begin{array}{rclrl}
        \veu^k &\,\to &\,\veu 
        &\,\mbox{ strongly in } &\,
        L^\infty\big([0,T];L^2(\Domn)\big);
        \\[2mm]
        \nabla\veu^k &\,\to &\,\nabla\veu 
        &\,\mbox{ strongly in } &\,
        L^\infty\big([0,T];L^2(\Domn)\big);
        \\[2mm]
        \ddt\veu^k &\,\to &\,\ddt\veu 
        &\,\mbox{ strongly in } &\,
        L^\infty\big([0,T];L^2(\Domn)\big);
        \\[2mm]
        \nabla\ddt\veu^k &\,\to &\,\nabla\ddt\veu 
        &\,\mbox{ strongly in } &\,
        L^2\big([0,T];L^2(\Domn)\big);
        \\[2mm]
        \psi^k &\,\to &\,\psi 
        &\,\mbox{ strongly in } &\,
        L^\infty \big([0,T];L^2(\Surf_\frc)\big).
        \end{array}
    \]
    \label{co1}
\end{corollary}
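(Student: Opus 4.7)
The plan is to leverage the contraction inequality \eqref{eq:contraction 1} established in Theorem \ref{th1}: with rate $\lambda \in (0,1)$, the consecutive-iterate errors $\veer_\veu^k$ and $\scer_\psi^k$ must decay geometrically in the composite norm appearing on its right-hand side. Setting
\[
E^k := \kappa_1 \norm{\ddt\veer_\veu^k}^2_{L^\infty([0,T];L^2(\Domn))}
+ \kappa_2 \norm{\veer_\veu^k}^2_{L^\infty([0,T];\spH)}
+ \kappa_3 \norm{\scer_\psi^k}^2_{L^\infty([0,T];L^2(\Surf_\frc))},
\]
iteration of \eqref{eq:contraction 1} yields both $E^k \leq \lambda^{k-1} E^1$ and $\kappa_4 \norm{\ddt\veer_\veu^k}^2_{L^2([0,T];\spH)} \leq \lambda\, E^{k-1}$, so each of the four norms on the LHS of \eqref{eq:contraction 1} decays geometrically with ratio $\sqrt{\lambda} < 1$ in $k$.

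Combining this geometric decay with the triangle inequality and the elementary bound $\sum_{j > k} \lambda^{j/2} \leq \lambda^{(k+1)/2}/(1 - \sqrt{\lambda})$ then shows that $\{\veu^k\}$, $\{\ddt\veu^k\}$ and $\{\psi^k\}$ are Cauchy in $L^\infty([0,T]; \spH)$, $L^\infty([0,T]; L^2(\Domn))$ and $L^\infty([0,T]; L^2(\Surf_\frc))$ respectively, and that $\{\ddt\veu^k\}$ is additionally Cauchy in $L^2([0,T]; \spH)$. Completeness of these Banach spaces provides strong limits $\veu$, $\vev$ and $\psi$ (the two limits of $\ddt\veu^k$ being, a priori, independent). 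The convergence $\veu^k \to \veu$ in $L^\infty([0,T]; \spH)$ at once delivers the first two claimed convergences, and the limit of $\psi^k$ gives the fifth.

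The main technical point, and the only real obstacle I anticipate, is identifying the two independently obtained limits of $\ddt\veu^k$ (one in $L^\infty([0,T]; L^2(\Domn))$, the other in $L^2([0,T]; \spH)$) with the distributional time derivative of the space-time limit $\veu$. The approach will be as follows: from $\veu^k \to \veu$ strongly in $L^\infty([0,T]; L^2(\Domn))$, and hence in the sense of distributions on $(0,T) \times \Domn$, continuity of the distributional operator $\partial_t$ forces $\ddt\veu^k \to \partial_t \veu$ distributionally. Since the strong $L^\infty([0,T]; L^2(\Domn))$ limit $\vev$ must coincide with this distributional limit, $\vev = \partial_t \veu =: \ddt\veu$, giving the third convergence. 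The same uniqueness argument, now carried out in $L^2([0,T]; L^2(\Domn))$, identifies the strong $L^2([0,T]; \spH)$ limit with $\ddt\veu$ as well, and extracting the gradient component yields $\nabla \ddt\veu^k \to \nabla \ddt\veu$ strongly in $L^2([0,T]; L^2(\Domn))$, which is the fourth convergence. Once these identifications are made, everything else reduces to routine Cauchy-sequence-in-Banach-space manipulations requiring no additional estimates.
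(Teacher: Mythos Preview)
Your proposal is correct and matches the paper's approach: the paper states this corollary as an immediate consequence of the contraction \eqref{eq:contraction 1} without spelling out the Cauchy-sequence and limit-identification details you provide. Note that the full verification that $(\veu,\psi)\in V_1\times V_2$ (in particular $\ddtt\veu\in L^2([0,T];\spH')$ and $\ddt\psi\in L^2([0,T];L^2(\Surf_\frc))$) is deferred in the paper to the proof of Theorem~\ref{th:exist}, so your omission of those points here is consistent with the paper's structure.
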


\begin{theorem}
Let the assumptions in Theorem \ref{th1} hold true. Then the limit of
the sequence written in Corollary \ref{co1}, $(\veu,\psi)\in V_1\times
V_2$, solves Problem \ref{pb1}.
\label{th:exist}
\end{theorem}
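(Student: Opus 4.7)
The plan is to pass to the limit, as $k\to\infty$, in each term of the iterative equations \eqref{eq:step 1} and \eqref{eq:step 2}, exploiting the strong convergences listed in Corollary \ref{co1}. I would first verify that the limit $(\veu,\psi)$ lies in $V_1\times V_2$: the regularity classes for $\veu$, $\ddt\veu$ and $\psi$ are inherited from the corresponding strong convergences, while $\ddtt\veu\in L^2([0,T];\spH')$ follows by rearranging \eqref{eq:weakform} and bounding all remaining terms in $\spH'$ via Lemma \ref{lm} and the trace inequality \eqref{eq:trace ineq jmp}; the constraint $\jmp{\ven\cdot\veu}=0$ on $\fSurf$ is preserved under strong $\spH$-convergence by continuity of the trace operator.

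For the linear terms in \eqref{eq:step 1}, I would test against $\vew(t)=\chi(t)\vew_0$ with $\chi\in C^\infty_c(0,T)$ and $\vew_0\in\spH$. The inertial pairing $\langle\rho^0\ddtt\veu^k,\vew\rangle_\Domn$ is converted by integration by parts in time into $-\innp{\rho^0\ddt\veu^k\,,\,\chi'\vew_0}_{L^2([0,T]\times\Domn)}$, which passes to the limit thanks to the strong convergence of $\ddt\veu^k$ in $L^\infty([0,T];L^2(\Domn))$. The terms $a_3(\veu^k,\vew)$, $\gamma(\ddt\veu^k,\vew)_\spH$ and $\jmp{\langle\vetau_2(\veu^{k-1}),\vew\rangle_{\fSurf}}$ are bounded and linear in $\veu^k$ or $\veu^{k-1}$, hence pass by the strong convergences in Corollary \ref{co1} together with Lemma \ref{lm}. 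The self-gravitation pairing $(\nabla\opS(\veu^{k-1}),\nabla\opS(\vew))_{L^2(\mathbb{R}^3)}$ converges because $\veu\mapsto\nabla\opS(\veu)$ is a bounded linear map into $L^2(\mathbb{R}^3)^3$, as already exploited in \eqref{eq:step1 err1}.

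The main obstacle is passing to the limit in the nonlinear terms $\bcalF(\bar\sigma(\veu^{k-1}),\psi^{k-1},\ddt\veu^k)$ and $\opG(|T_\frcpm(\ddt\veu^k)|,\psi^k)$. To handle $\bcalF$, I would split the increment by telescoping across its three arguments, apply Assumption \ref{ass:Lipschitz op} on $\opF$ together with the uniform bound $|\bPsi|\leq 1$ from \eqref{eq:bnd dPsi} and the Lipschitz character of $\bPsi$ for the fixed regularization parameter $\veps>0$, and convert surface norms into $\spH$-norms via \eqref{eq:trace ineq jmp} and the Dirichlet-to-Neumann bound \eqref{eq:DtN bnd}. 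The strong convergences in $L^\infty([0,T];\spH)$ for $\veu^k$, in $L^2([0,T];\spH)$ for $\ddt\veu^k$, and in $L^\infty([0,T];L^2(\fSurf))$ for $\psi^k$ then force this increment to vanish in $L^2([0,T];H^{-1/2}(\fSurf))$; an entirely analogous argument in $L^2([0,T];L^2(\fSurf))$ disposes of the $\opG$ term in \eqref{eq:step 2}. The initial conditions $\veu|_{t=0}=0$, $\ddt\veu|_{t=0}=0$ and $\psi|_{t=0}=\psi^0$ are inherited from the uniform convergence on $[0,T]$, so $(\veu,\psi)$ satisfies Problem \ref{pb1}.
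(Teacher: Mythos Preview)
Your proposal is correct and follows essentially the same route as the paper: pass to the limit term by term using the strong convergences from Corollary~\ref{co1}, the Lipschitz continuity of $\opF$, $\opG$ and $\bPsi$, the trace and Dirichlet-to-Neumann bounds, and integration by parts in time for the inertial and $\ddt\psi$ terms, then recover $\ddtt\veu\in L^2([0,T];\spH')$ from the limiting equation. The only cosmetic point is that the regularity of $\ddtt\veu$ should be deduced \emph{after} the limit equation \eqref{eq:weakform} is established, not before, exactly as the paper does; your sketch reverses this order in presentation but the underlying argument is the same.
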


\begin{proof}
Due to the fact that the trace operator $T_\frcpm:\spH\to
H^{1/2}(\Surf_\frc)$ is Lipschitz continuous, we have the convergence,
\begin{equation}
    T_\frcpm(\ddt\veu^k) \to T_\frcpm(\ddt\veu) 
    \mbox{ strongly in } 
    L^2 \big([0,T];L^2(\Surf_\frc)\big).
    \label{eq: trace conv}
\end{equation}
Meanwhile, $\opG $ is a function of two variables, and is Lipschitz 
with regard to each, that is, for a fixed $\psi^k$,
\[
    \begin{split}
        &
    \big|\opG\big(|T_\frcpm(\ddt\veu^k)|\,,\, \psi^k\big) -
         \opG\big(|T_\frcpm(\ddt\veu)|  \,,\, \psi^k\big) \big|
    \leq c_1 \big||T_\frcpm(\ddt\veu^k)|-|T_\frcpm(\ddt\veu)|\big|
    \\&\hspace{1cm}
    \leq c_1 \big|T_\frcpm(\ddt\veu^k)-T_\frcpm(\ddt\veu)\big|,
    \end{split}
\]
where the second inequality follows from the triangle inequality, 
and for a fixed $\ddt\veu$, 
\[
    \begin{split}
        &
    \big|\opG\big(|T_\frcpm(\ddt\veu)|\,,\,\psi^k\big) -
         \opG\big(|T_\frcpm(\ddt\veu)|\,,\,\psi\big) \big|
    \leq c_2 \big|\psi^k-\psi\big|.
    \end{split}
\]
Therefore,
\begin{equation}
    \begin{split}
        &
    \norm{\opG\big(|T_\frcpm(\ddt\veu^k)|\,,\,\psi^k\big) -
          \opG\big(|T_\frcpm(\ddt\veu)|  \,,\,\psi\big) 
    }_{L^2(\Surf_\frc)}
    \\&\hspace{0.5cm}
    \leq 
    \norm{\opG\big(|T_\frcpm(\ddt\veu^k)|\,,\,\psi^k\big) -
          \opG\big(|T_\frcpm(\ddt\veu)|  \,,\,\psi^k\big) 
    }_{L^2(\Surf_\frc)}
    \\&\hspace{1cm}
    +\norm{\opG\big(|T_\frcpm(\ddt\veu)|\,,\,\psi^k\big) -
           \opG\big(|T_\frcpm(\ddt\veu)|\,,\,\psi\big)
    }_{L^2(\Surf_\frc)}
    \\&\hspace{0.5cm}
    \leq c_1 \norm{T_\frcpm(\ddt\veu^k)-T_\frcpm(\ddt\veu)}
    _{L^2(\Surf_\frc)}
    + c_2 \norm{\psi^k-\psi}_{L^2(\Surf_\frc)}
    \\&\hspace{0.5cm}
    \leq C_1 \norm{\ddt\veu^k-\ddt\veu}
    _{\spH}
    + c_2 \norm{\psi^k-\psi}_{L^2(\Surf_\frc)}.
    \end{split}
    \label{eq: G conv}
\end{equation}
We invoke the usual density argument, take $\phi$ sufficiently smooth,
with the property $\phi|_{t=T}=0$, do integration by parts, and pass
the limit $\psi^k\to\psi$,
\[
    \begin{aligned}
        &
    \int_0^T\int_{\Surf_\frc}
    \ddt\psi^k\,\varphi
    \dd\Surf\dd t
    =
    -\int_0^T\int_{\Surf_\frc}
    \psi^k\,\ddt\varphi
    \dd\Surf\dd t
    \\&\hspace{4cm}
    \xrightarrow{k\to \infty}
    -\int_0^T\int_{\Surf_\frc}
    \psi\,\ddt\varphi
    \dd\Surf\dd t
    =
    \int_0^T\int_{\Surf_\frc}
    \ddt\psi\,\varphi
    \dd\Surf\dd t.
    \end{aligned}
\]
Taking for $\phi$ a product of smooth functions of space and time,
from \eqref{eq: trace conv}--\eqref{eq: G conv} we conclude that
\begin{equation}
    \int_{\Surf_\frc}
    \opG\big(|T_\frcpm(\ddt\veu^k)|\,,\,\psi^k\big)\,\varphi\dd\Surf
    \xrightarrow{k\to \infty}
    \int_{\Surf_\frc}\opG\big(|T_\frcpm(\ddt\veu)|\,,\,\psi\big)
    \,\varphi\dd\Surf
    \quad\mbox{in } L^2\big([0,T]\big).
    \label{eq: conv G}
\end{equation}
Therefore, $\ddt\psi\in L^2\big([0,T];L^2(\Surf_\frc)\big)$, and the
pair $(\veu,\psi)$ solves \eqref{eq:step 2}.

Now, we consider \eqref{eq:step 1}. First of all, a direct
consequence of Corollary \ref{co1} is that
\begin{equation}
    \big(\ddt\veu^k,\,\vew\big)_\spH
    \xrightarrow{k\to \infty}
    \big(\ddt\veu,\,\vew\big)_\spH
    \quad\mbox{in } L^2\big([0,T]\big).
    \label{eq:conv gamma H}
\end{equation}
Secondly, since $\opF$ is a function of 
three variables and is Lipschitz continuous with regard to each
(and so is $\bcalF$), following the same procedure for obtaining
\eqref{eq: conv G}, we can also get
\begin{equation}
    \big\langle
    \bcalF\big(\sigma^{k-1},\ddt\veu^{k},\psi^{k-1}\big),
    \vew \big\rangle_{\Surf_\frc}
    \xrightarrow{k\to \infty}
    \big\langle
    \bcalF\big(\sigma,\ddt\veu,\psi\big),
    \vew \big\rangle_{\Surf_\frc}
    \quad\mbox{in } L^2\big([0,T]\big).
    \label{}
\end{equation}
Meanwhile, $a_3(\veu,\vew)$ is a bilinear form that contains
$\nabla\veu_k$ and $\veu_k$. From Corollary \ref{co1}, 
we conclude that
\begin{equation}
    a_3\big(\veu^k\,,\,\vew\big) \to a_3\big(\veu\,,\,\vew\big)
    \quad\mbox{in } L^\infty\big([0,T]\big).
    \label{}
\end{equation}
Also, the linear map $\nabla \opS:L^2(\Domn) \to L^2(\mathbb{R}^3)$ is a 
Lipschitz continuous map, which indicates that
\begin{equation}
    \norm{\nabla \opS(\veu^k)-\nabla \opS(\veu)}_{L^2(\mathbb{R}^3)}
    \leq C \norm{\veu^k-\veu}_{L^2(\Domn)}
    \xrightarrow{k\to \infty}
    0 
    \quad\mbox{in } L^\infty\big([0,T]\big).
    \label{}
\end{equation}
Finally, with $\vetau_2=\itLa'_{\tsLaINI,\rho^0,\veg^0}\circ T_\frc$ a
linear map, where $\itLa'_{\tsLaINI,\rho^0,\veg^0}$ is the bounded
linear Dirichlet-to-Neumann map $H^{1/2}(\Surf_\frc) \to
H^{-1/2}(\Surf_\frc)$, we therefore have
\begin{equation}
    \jmp{\big\langle\vetau_2(\veu^{k-1}) \,,\, \vew 
    \big\rangle_{\Surf_\frc}}
    \xrightarrow{k\to \infty}
    \jmp{\big\langle\vetau_2(\veu) \,,\, \vew 
    \big\rangle_{\Surf_\frc}}
    \quad\mbox{in } L^2\big([0,T]\big).
    \label{eq:conv D2N}
\end{equation}
Based on the discussion above, we find that all the terms in
\eqref{eq:weakform}, except the one containing $\ddtt \veu$, are well
defined in the space of $L^2([0,T])$. It is then clear that
$\big\langle \rho^0\ddtt\veu^{k}\,,\, \vew \big\rangle_{\Domn}$ is
well defined in the distribution space $\mathcal{D}'([0,T])$. We move
all terms except the one containing $\ddtt \veu^k$ in \eqref{eq:step
  1} into right-hand-side, the result of which can be written in the
formal way,
\begin{equation}
    \big\langle \rho^0\ddtt\veu^k \,,\, \vew \big\rangle_{\Domn}
    =\mathcal{A}\big(\veu^k,\ddt\veu^k;\vew \big),
    \label{}
\end{equation}
where $\mathcal{A}$ is nonlinear with regard to $\veu^k$ and
$\ddt\veu^k$, and is linear with regard to $\vew$. By a density
argument, for any $\vew \in L^2\big([0,T];\spH\big)$ there exist a
sequence $\big\{\vew_i\big\}_{i=1}^\infty$ with $\vew_i \in
\mathcal{C}^\infty\big([0,T];\spH\big)$, such that
$\vew_i\xrightarrow{i\to \infty}\vew $. Thus
\begin{equation}
    \begin{aligned}
        &
    \int_0^T
    \big\langle \rho^0\ddtt\veu^{k}\,,\,
    \vew_i \big\rangle_{\Domn}
    \dd \tau
    =
    \int_0^T
    \mathcal{A}\big(\veu^k,\ddt\veu^k;\vew_i \big)
    \dd \tau
    \\&\hspace{3cm}
    \xrightarrow{k\to \infty}
    \int_0^T
    \mathcal{A}\big(\veu,\ddt\veu;\vew_i \big)
    \dd \tau
    =
    \int_0^T
    \big\langle \rho^0\ddtt\veu\,,\,
    \vew_i \big\rangle_{\Domn}.
    \end{aligned}
    \label{}
\end{equation}
As $i\to \infty$, the following limit holds
\[
    \int_0^T
    \big\langle \rho^0\ddtt\veu^{k}\,,\,
    \vew \big\rangle_{\Domn}
    \dd \tau
    \xrightarrow{k\to \infty}
    \int_0^T
    \big\langle \rho^0\ddtt\veu\,,\,
    \vew \big\rangle_{\Domn},\quad
    \mbox{ for all } \vew \in 
    L^2\big([0,T];\spH\big).
\]
Therefore, $\ddtt\veu \in L^2\big([0,T];\spH'\big)$ (see also
\cite{Martins1987}). We can now conclude that $(\veu,\psi)\in
V_1\times V_2$ also solves \eqref{eq:step 1}, and therefore the
coupled system.
\end{proof}

\section{Implicit discretization in time}
\label{sec:disc t}

We define the space of time-discretized weak solutions for Problem
\ref{pb2} below as $\hat V_1\times \hat V_2 $, with
\begin{equation}
    \begin{split}
        \hat V_1:=&\left\{
            \hat \vev \in 
            \spH
            \,\Big\vert\,
            \jmp{\ven\cdot\hat \vev}=0
            \mbox{ on } \Surf_\frc 
        \right\},
        \\
        \hat V_2\,=&\,
        L^2(\Surf_\frc).
    \end{split}
    \label{eq:space of disc sol}
\end{equation}
To simplify the analysis, we semi-discretize Problem \ref{pb1} with a
uniform time step. We denote the particle velocity $\vev:=\ddt\veu$,
and discretize the time interval by $\Dt=\frac TN$, and let
$t_n=n\Dt$. We use index $n$ in the superscript $\hat v^\br{n}$ to
indicate the time-discretized solution to a time dependent variable
$v$ at time $t=t_n$. A backward Euler time discretization of Problem
\ref{pb1} is described by the following scheme.

\medskip\medskip

\begin{problem}
    Let $\veps$ and $\gamma$ be fixed, strictly positive constants,
    and let the solution for the previous time step $t=t_{n-1}$, $
    (\hat\veu^\br{n-1},\hat\vev^\br{n-1},\hat\psi^\br{n-1}) \in \hat
    V_1\times \hat V_1\times \hat V_2, $ be given. Find the solution
    $(\hat\veu^\br{n},\hat\vev^\br{n},\hat\psi^\br{n}) \in \hat
    V_1\times \hat V_1\times \hat V_2$ for the current time step
    $t=t_n$, such that
\begin{subequations}\label{eq:disc}
\begin{align}
        &
    \begin{aligned}
        &
    \frac1\Dt
    \innp{ \rho^0\,\hat\vev^\br{n}\,,\,\hat\vew }_{L^2(\Domn)}
    +a_3\big(\hat\veu^\br{n}\,,\,\hat\vew\big)
    -\frac1{4\pi \Grav}
    \innp{ \nabla \opS(\hat\veu^\br{n})\,,\, \nabla \opS(\hat\vew) }
    _{L^2(\mathbb{R}^3)}
    \\[1mm] & \hspace{5mm}
    +\gamma \innp{\hat\vev^\br{n} \,,\, \hat\vew }_{\spH}
    +\big\langle\bcalF\big(\bar\sigma(\hat\veu^\br{n}),\hat\psi^\br{n},
    \hat\vev^\br{n}\big) \,,\, \hat\vew 
    \big\rangle_{\Surf_\frc}
    -\jmp{\big\langle\vetau_2(\hat\veu^\br{n}) \,,\, \hat\vew
    \big\rangle_{\Surf_\frc} }
    \\[1mm] & \hspace{5mm}
    =
    \big\langle\ven\cdot(\tsT^0+\tsTd^\br{n}) \,,\, T_\frcpm(\hat\vew) 
    \big\rangle_{\Surf_\frc} 
    +\frac1\Dt
    \innp{ \rho^0\,\hat\vev^\br{n-1}\,,\,\hat\vew }_{L^2(\Domn)}
    ,
    \end{aligned}
    \label{eq:disc v}
    \\[2mm] &
    \hat\veu^\br{n}-\Dt\hat\vev^\br{n}=\hat\veu^\br{n-1} ,
    \label{eq:disc u}
    \\[2mm] &
    \frac1\Dt\innp{\hat\psi^\br{n} \,,\,\hat\varphi}_{L^2(\Surf_\frc)}
    +
    \innp{\opG\big(|T_\frcpm(\hat\vev^\br{n})|,\hat\psi^\br{n}\big),\hat\varphi}
    _{L^2(\Surf_\frc)}
    = 
    \frac1\Dt\innp{\hat\psi^\br{n-1}\,,\,\hat\varphi}_{L^2(\Surf_\frc)}
    \label{eq:disc psi}
\end{align}
\end{subequations}
hold for all $(\hat\vew,\,\hat\varphi) \in \hat V_1 \times \hat V_2$.
\label{pb2}
\end{problem}

\medskip\medskip

\noindent
The corresponding iterative coupling scheme to Problem \ref{pb2}
is similar to the one in (\ref{eq:step 1}-\ref{eq:step 2}) 

\medskip\medskip

\begin{problem}
Let $\veps$ and $\gamma$ be fixed, strictly positive constants, and
let the solutions,
$(\hat\veu^\br{n-1},\hat\vev^\br{n-1},\hat\psi^\br{n-1}) \in \hat
V_1\times \hat V_1\times \hat V_2$ for the previous time step
$t=t_{n-1}$, be given. Assume that
$(\hat\veu^\br{n,k-1},\hat\vev^\br{n,k-1},\hat\psi^\br{n,k-1}) \in
\hat V_1\times \hat V_1\times \hat V_2$ is the solution for the
current time step $t=t_n$ at iteration $k-1$ with $k\geq 1$, where at
the beginning take
\[
\hat\veu^\br{n,0}=\hat\veu^\br{n-1},\quad
\hat\vev^\br{n,0}=\hat\vev^\br{n-1},\quad\mbox{ and }\quad
\hat\psi^\br{n,0}=\hat\psi^\br{n-1}.
\]
Find the solution
$(\hat\veu^\br{n,k},\hat\vev^\br{n,k},\hat\psi^\br{n,k}) \in \hat
V_1\times \hat V_1\times \hat V_2$ for $t=t_n$ at iteration $k$, such
that
\begin{subequations}\label{eq:disc split}
    \begin{align}
        &
    \begin{aligned}
        &
    \frac1\Dt
    \innp{ \rho^0\,\hat\vev^\br{n,k}\,,\,\hat\vew }_{L^2(\Domn)}
    +a_3'\big(\hat\veu^\br{n,k}\,,\,\hat\vew\big)
    -C_{a_3}'\innp{\hat\veu^{\br{n,k}}\,,\,\hat\vew}_{L^2(\Domn)}
    \\[1mm] & \hspace{5mm}
    -\frac1{4\pi \Grav}
    \innp{\nabla \opS(\hat\veu^{\br{n,k-1}})\,,\,\nabla \opS(\hat\vew)}
    _{L^2(\mathbb{R}^3)}
    +\gamma\innp{\hat\vev^{\br{n,k}}\,,\,\hat\vew}
    _{\spH}
    \\[1mm] & \hspace{5mm}
    +\big\langle\bcalF\big(\bar\sigma(\hat\veu^{\br{n,k-1}}),
    \hat\psi^{\br{n,k-1}},\hat\vev^{\br{n,k}}\big)
    \,,\,\hat\vew\big\rangle
    _{\Surf_\frc}
    -\jmp{\big\langle\vetau_2(\hat\veu^{\br{n,k-1}})\,,\,\hat\vew
    \big\rangle_{\Surf_\frc}}
    \\[1mm] & \hspace{5mm}
    =
    \big\langle\ven\cdot(\tsT^0+\tsTd^\br{n})\,,\,T_\frcpm(\hat\vew)
    \big\rangle_{\Surf_\frc} 
    +\frac1\Dt
    \innp{ \rho^0\,\hat\vev^\br{n-1}\,,\,\hat\vew }_{L^2(\Domn)}
    ,
    \end{aligned}\label{eq:disc split v}
    \\[2mm] &
    \hat\veu^{\br{n,k}}-\Dt\hat\vev^{\br{n,k}}=\hat\veu^\br{n-1} ,
    \label{eq:disc split u}
    \\[1mm]&
    \frac1\Dt\innp{\hat\psi^{\br{n,k}}\,,\,\hat\varphi}
    _{L^2(\Surf_\frc)}
    +
    \innp{\opG\big(|T_\frcpm(\hat\vev^{\br{n,k}})|,\hat\psi^{\br{n,k}}\big),
    \hat\varphi}
    _{L^2(\Surf_\frc)}
    =
    \frac1\Dt\innp{\hat\psi^\br{n-1}\,,\,\hat\varphi}
    _{L^2(\Surf_\frc)}
    \label{eq:disc split psi}
    \end{align}
\end{subequations}
hold for all $(\hat\vew,\,\hat\varphi) \in \hat V_1 \times \hat V_2$.
\label{pb3}
\end{problem}

\medskip\medskip

\noindent 
In the remainder of this section, subject to existence of a
time-continuous solution in $V_1\times V_2$ that is given in Theorem
\ref{th:exist}, we prove that the solution of Problem~\ref{pb3}
linearly converges to the unique solution of Problem~\ref{pb2} under
some restrictions on the time step as well as the viscosity
coefficient, by a given rate $\lambda\in (0,1)$.

\begin{theorem}
Let $\gamma$ and $\Dt$ satisfy
\begin{equation}
    \begin{split}
    \frac1{\Dt}
    \geq &
    \max\left(
    \frac{C_{\opF,\psi}^{\star\,2}}{2\lambda C_\veps C_{\opF,s}}
    +\frac{C_{\opG,s}^{\star\,2}}{2C_\veps C_{\opF,s}}
    -C_{\opG,\psi}
    \,,\,
    \sqrt{\left(
    \frac{C_S C_{\rho^0}^\star}{4\pi \Grav C_{\rho^0}\sqrt{\lambda}}
    +\frac{C_{a_3}'}{C_{\rho^0}}
    \right)}
    \,\right),
\\
    \frac\gamma \Dt
    \geq &
    \frac{\big(C_{\opF,\sigma}^{\star\,2}(C_I+C'_I)^2
    +C^{'\,2}_I\big)}{2\lambda C_{a_3}}
        \end{split}
        \label{eq:disc contr cond}
\end{equation}
for some constant $\lambda \in (0,1)$. Then the iterative coupling
scheme described by Problem \ref{pb3} is a contraction in the sense
that
\begin{equation}
    \begin{aligned}
        &
      \kappa_1\norm{\Veer_\vev^k}^2_{L^2(\Domn)}
    + \kappa_2\norm{\Veer_\veu^k}^2_{\spH}
    + \kappa_3\norm{\Scer_\psi^k}^2_{L^2(\Surf_\frc)}
    \\&\hspace{1cm}
    \leq\lambda\left(
      \kappa_1\norm{\Veer_\vev^{k-1}}^2_{L^2(\Domn)}
    + \kappa_2\norm{\Veer_\veu^{k-1}}^2_{\spH}
    + \kappa_3\norm{\Scer_\psi^{k-1}}^2_{L^2(\Surf_\frc)}
    \right)
    ,
    \end{aligned}
    \label{eq:contraction 2}
\end{equation}
where
\[
\Veer_\veu^k:=\hat\veu^{\br{n,k}}-\hat\veu^\br{n,k-1}, \,
\Veer_\vev^k:=\hat\vev^{\br{n,k}}-\hat\vev^\br{n,k-1}, \,
\Scer_\psi^k:=\hat\psi^\br{n,k}-\hat\psi^\br{n,k-1}
\]
and
\[
    \kappa_1=
    \frac{C_{\rho^0}}\Dt - \Dt C_{a_3}'
    -\frac{\Dt C_S C_{\rho^0}^\star}{8\,\pi \Grav\sqrt{\lambda}}
    ,\quad
    \kappa_2=
    \Dt^{-1}C_{a_3}
    \quad
    \mbox{and }\hspace{1mm}
    \kappa_3=
    \frac1{\Dt}
    -\frac{C_{\opG,s}^{\star\,2}}{2C_\veps C_{\opF,s}}
    \quad.
\]
\label{th2}
\end{theorem}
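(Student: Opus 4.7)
The plan is to mirror the structure of the proof of Theorem \ref{th1}, exploiting the fact that in the single-step discrete setting we no longer need a Gronwall argument, but we must be careful because the time derivative of the energy is replaced by a one-sided difference and the kinematic coupling $\hat\veu^{\br{n,k}}-\Dt\hat\vev^{\br{n,k}}=\hat\veu^{\br{n-1}}$ ties $\Veer_\veu^k$ to $\Dt\,\Veer_\vev^k$ exactly. First, I would introduce error quantities analogous to those used in Theorem \ref{th1}, namely $\Veer_{\bcalF}^k$, $\Scer_{\opF}^k$, $\Scer_{\opG}^k$, $\Scer_s^k$ and $\Scer_s^{\veps,k}$, all built from the differences $\Veer_\vev^k,\Veer_\veu^k,\Scer_\psi^k$. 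The key identity
\[
\Veer_\veu^k = \Dt\,\Veer_\vev^k
\]
will be used repeatedly to convert between norms on $\spH$ and the $L^2(\Domn)$ norm of the velocity increment, and, via the trace lemma \ref{th:trace thm jmp}, to control $\|\Scer_s^k\|_{L^2(\fSurf)}$ by $\|\Veer_\vev^k\|_{\spH}$.

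Next, I would subtract iteration $k-1$ from iteration $k$ of \eqref{eq:disc split v}, test the resulting equation against $\hat\vew=\Veer_\vev^k$, and separately subtract the two versions of \eqref{eq:disc split psi} and test against $\hat\varphi=\Scer_\psi^k$. Using $\Veer_\veu^k=\Dt\Veer_\vev^k$ the inertial term produces $\Dt^{-1}\,\|\Veer_\vev^k\|_{L^2(\Domn;\rho^0)}^2$, the bilinear form gives $\Dt^{-1}\,a_3'(\Veer_\veu^k,\Veer_\veu^k)$ up to the shift $C_{a_3}'\|\Veer_\veu^k\|_{L^2(\Domn)}^2$ absorbed via \eqref{eq:a3p}, and the viscous term yields $\gamma\,\|\Veer_\vev^k\|_{\spH}^2$. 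The state-ODE test produces $(\Dt^{-1}+C_{\opG,\psi})\,\|\Scer_\psi^k\|_{L^2(\fSurf)}^2$ on the left with a cross term $C_{\opG,s}^\star\|\Scer_s^k\|\,\|\Scer_\psi^k\|$ on the right, to be split by Young.

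The main obstacle is the friction term $\langle\Veer_{\bcalF}^k,\Veer_\vev^k\rangle_{\fSurf}$: I would reproduce the convexity trick from the proof of Theorem \ref{th1} based on the identity
\[
f_1\,\frac{-|\vei|^2+\vei\cdot\vej}{\sqrt{|\vei|^2+\veps^2}}
+f_2\,\frac{-|\vej|^2+\vei\cdot\vej}{\sqrt{|\vej|^2+\veps^2}}
\leq (f_1-f_2)\bigl(-\sqrt{|\vei|^2+\veps^2}+\sqrt{|\vej|^2+\veps^2}\bigr),
\]
which in the present notation with $\vei=T_\frcpm(\hat\vev^{\br{n,k}})$, $\vej=T_\frcpm(\hat\vev^{\br{n,k-1}})$ extracts a dissipative term $-C_\veps C_{\opF,s}\|\Scer_s^k\|_{L^2(\fSurf)}^2$ plus lower-order contributions bounded, via Lemma \ref{lm} and \eqref{eq:scer s norm}, by $(C_I+C'_I)C_{\opF,\sigma}^\star\,\|\Veer_\veu^{k-1}\|_{\spH}\,\|\Veer_\vev^k\|_{\spH}$ and $C_{\opF,\psi}^\star\,\|\Scer_\psi^{k-1}\|_{L^2(\fSurf)}\,\|\Scer_s^k\|_{L^2(\fSurf)}$. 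The jump term $\jmp{\langle\vetau_2(\Veer_\veu^{k-1}),\Veer_\vev^k\rangle_{\fSurf}}$ is handled identically via Lemma \ref{lm} with constant $C'_I$, and the gravitational coupling is bounded by the Poisson estimate $\|\nabla\opS(\Veer_\veu^{k-1})\|_{L^2(\mathbb{R}^3)}^2\leq C_S\|\Veer_\veu^{k-1}\|_{L^2(\Domn;\rho^0)}^2$.

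Finally I would choose the Young-inequality splitting parameters so that the $\|\Veer_\vev^k\|_{\spH}^2$ coefficients produced by the friction, the surface divergence and gravitation terms add to at most $\gamma$, the $\|\Scer_s^k\|_{L^2(\fSurf)}^2$ coefficients on the right add to at most $2C_\veps C_{\opF,s}$, and the $\|\Scer_\psi^k\|_{L^2(\fSurf)}^2$ coefficient sums to at most $2(\Dt^{-1}+C_{\opG,\psi})$; the natural choices mirror those in Theorem \ref{th1}, $(C_I+C'_I)C_{\opF,\sigma}^\star\delta_2=C'_I\delta_4=\gamma/2$ and $C_{\opF,\psi}^\star\delta_3=C_{\opG,s}^\star\delta_5=C_\veps C_{\opF,s}$. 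Collecting everything, using coercivity \eqref{eq:coercivity a3p}, and rearranging yields
\[
\kappa_1\|\Veer_\vev^k\|_{L^2(\Domn)}^2+\kappa_2\|\Veer_\veu^k\|_{\spH}^2+\kappa_3\|\Scer_\psi^k\|_{L^2(\fSurf)}^2
\leq \lambda\bigl(\kappa_1\|\Veer_\vev^{k-1}\|^2+\kappa_2\|\Veer_\veu^{k-1}\|^2+\kappa_3\|\Scer_\psi^{k-1}\|^2\bigr),
\]
provided the conditions \eqref{eq:disc contr cond} hold; positivity of $\kappa_1,\kappa_2,\kappa_3$ follows from these same conditions, and the square-root appearing in the bound on $\Dt^{-1}$ comes from balancing the gravitational and stress-shift contributions after multiplying the entire inequality by $\Dt$.
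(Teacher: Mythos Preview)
Your proposal is correct and follows essentially the same route as the paper: subtract successive iterates of \eqref{eq:disc split v} and \eqref{eq:disc split psi}, test with $\Veer_\vev^k$ and $\Scer_\psi^k$, invoke the convexity estimate for $\bcalF$, Lemma~\ref{lm} for the $\bar\sigma$ and $\vetau_2$ cross terms, the Poisson bound for $\opS$, and then tune Young parameters. Two small adjustments are needed relative to what you wrote: because $\Veer_\veu^{k-1}=\Dt\,\Veer_\vev^{k-1}$ introduces an extra $\Dt$ in the $\bar\sigma$ and $\vetau_2$ cross terms, the balancing choices become $(C_I+C'_I)C_{\opF,\sigma}^\star\delta=C'_I\delta'=\gamma/\Dt$ rather than $\gamma/2$, and the gravitational Young split must be taken asymmetrically with parameter $\lambda^{-1/2}$ to produce the $\sqrt{\lambda}$ appearing in $\kappa_1$ and the first branch of \eqref{eq:disc contr cond}.
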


\begin{proof}
We define the error vectors and scalars,
\[
\begin{array}{rcl}
\Veer_{\bcalF}^{k} &:=&
\bcalF\big(\bar\sigma(\hat\veu^{\br{n,k-1}}),\hat\psi^{\br{n,k-1}},
\hat\vev^{\br{n,k}}\big)
\\[1mm] & &\hspace{1cm}
-\bcalF\big(\bar\sigma(\hat\veu^\br{n,k-2}),\hat\psi^\br{n,k-2},
\hat\vev^\br{n,k-1}\big) ,
\\[3mm]
\Scer_{\opF}^k &:=&
\opF\big(\bar\sigma(\hat\veu^{\br{n,k-1}}),
|T_\frcpm(\hat\vev^{\br{n,k}})|,\hat\psi^{\br{n,k-1}}\big)
\\[1mm] & &\hspace{1cm}
-\opF\big(\bar\sigma(\hat\veu^\br{n,k-2}),
|T_\frcpm(\hat\vev^\br{n,k-1})|,\hat\psi^\br{n,k-2}\big) ,
\\[3mm]
\Scer_{\opG}^k &:=&
\opG\big(|T_\frcpm(\hat\vev^\br{n,k})|,\hat\psi^\br{n,k}\big)-
\opG\big(|T_\frcpm(\hat\vev^\br{n,k-1})|,\hat\psi^\br{n,k-1}\big) ,
\\[3mm]
\Scer_{s}^k &:=&
|T_\frcpm(\hat\veu^{\br{n,k}})|-|T_\frcpm(\hat\veu^\br{n,k-1})| .
\end{array}
\]
Clearly, $\Veer_\veu^k=\Dt\Veer_\vev^k$ based on \eqref{eq:disc split u},
and similarly to (\ref{eq:scer s norm}), 
\begin{equation}
    \norm{\Scer_s^k}_{L^2(\fSurf)}^2\leq 
    \norm{T_\frcpm(\Veer_\vev^k)}_{L^2(\fSurf)}^2
    \leq\norm{T_\frc(\Veer_\vev^k)}_{L^2(\fSurf)}^2
    \leq C_\frc \norm{\Veer_\vev^k}_{\spH}^2.
    \label{eq:scer s norm disc}
\end{equation}
We subtract (\ref{eq:disc split}a--c) at iteration $k-1$ from
  the corresponding equations at iteration $k$ to obtain the error
  estimate,
\begin{align}
    &
    \begin{aligned}
        &
    \frac1\Dt
    \innp{ \rho^0\,\Veer_\vev^\br{k}\,,\,\hat\vew }_{L^2(\Domn)}
    +\Dt a_3'\big(\Veer_\vev^k\,,\,\hat\vew\big)
    -\Dt C_{a_3}'\innp{\Veer_\vev^k\,,\,\hat\vew}_{L^2(\Domn)}
    \\[1mm] & \hspace{1cm}
    -\frac\Dt{4\pi \Grav} \big( \nabla \opS(\Veer_\vev^{k-1})
    \,,\, \nabla \opS(\hat\vew) \big)_{L^2(\mathbb{R}^3)}
    +\gamma \big(\Veer_\vev^{k} \,,\, \hat\vew \big)_{\spH}
    \\[1mm] & \hspace{1cm}
    +\big\langle\Veer_{\bcalF}^{k}
    \,,\, \hat\vew \big\rangle_{\Surf_\frc}
    -\jmp{\big\langle\vetau_2\big(\Veer_{\veu}^{k-1} \big)
    \,,\,\hat\vew\big\rangle_{\Surf_\frc}}
    =0,
    \end{aligned}
    \label{eq:err v}
    \\[2mm]&
    \vetau_2\big(\Veer_{\veu}^{k}\big)=-\Dt
    \itLa'_{\tsLaINI,\rho^0,\veg^0}
    (\Veer_\vev^k),
    \label{eq:err tau2}
    \\[1mm]&
    \bar\sigma\big(\Veer_{\veu}^{k}\big)=-\Dt\ven\cdot\big(
    \itLa_{\tsLaINI,\rho^0,\veg^0}+
    \itLa'_{\tsLaINI,\rho^0,\veg^0}\big)
    (\Veer_\vev^k).
    \label{eq:err sigma}
\end{align}
We let $\hat\vew=\Veer_\vev^{k}$, so that (\ref{eq:err v}) becomes
\begin{equation}
    \begin{aligned}
        &
    \frac1\Dt\norm{\Veer_\vev^{k} }^2_{L^2(\Domn;\rho^0)}
    +\Dt a_3'\big(\Veer_\vev^{k},\Veer_\vev^k\big)
    +\gamma \norm{\Veer_\vev^{k}}^2_{\spH}
    \\[0.5mm] & \hspace{0.5cm}
    =
    \Dt C_{a_3}'\norm{\Veer_\vev^k}_{L^2(\Domn)}
    +\frac\Dt{4\pi \Grav} \big( \nabla \opS(\Veer_\vev^{k-1})
    \,,\, \nabla \opS(\Veer_\vev^{k}) \big)_{L^2(\mathbb{R}^3)}
    \\[0.5mm] & \hspace{1cm}
    -\big\langle\Veer_{\bcalF}^{k}
    \,,\, \Veer_\vev^k \big\rangle_{\Surf_\frc}
    +\jmp{\big\langle\vetau_2\big(\Veer_{\veu}^{k-1} \big)
    \,,\,\Veer_\vev^k\big\rangle_{\Surf_\frc}}.
    \end{aligned}
    \label{eq:err v1}
\end{equation}
We denote by $J_0, J_1, J_2$ and $J_3$ the terms on the right-hand
side of (\ref{eq:err v1}), and similarly to (\ref{eq:step1
  err1}-\ref{eq:err I3}),
\begin{align}
    J_1 \leq & 
    \frac{\Dt C_S C_{\rho^0}^\star}{8\pi \Grav}\left(\frac1{\delta_6}
    \norm{\Veer_\vev^{k-1}}_{L^2(\Domn)}^2
    +{\delta_6}
    \norm{\Veer_\vev^{k}}_{L^2(\Domn)}^2
    \right),
    \label{eq:err J1}
    \\[2mm]
    J_2 \leq &
    - C_\veps C_{\opF,s}\norm{\Scer_s^k}_{L^2(\Surf_\frc)}^2
    + C_{\opF,\sigma}^\star\big\langle|\bar\sigma(\Veer_\veu^{k-1})|
    \,,\, |\Scer_s^k|\big\rangle_{\Surf_\frc}
    + C_{\opF,\psi}^\star\big(|\Scer_\psi^{k-1}|
    \,,\, |\Scer_s^k|\big)_{L^2(\Surf_\frc)},
    \label{eq:err J2}
\end{align}
with
\begin{align}
    \big\langle |\bar\sigma(\Veer_\veu^{k-1})|
    \,,\, |\Scer_s^k|\big\rangle_{\Surf_\frc}
    \leq&\, 
    \Dt (C_I+C'_I) \left(\frac1{2\delta_7}
    \norm{\Veer_\vev^{k-1}}_{\spH}^2
    +\frac{\delta_7}2\norm{\Veer_\vev^k}_{\spH}^2
    \right),
    \\
    \big(|\Scer_\psi^{k-1}|
    \,,\, |\Scer_s^k|\big)_{L^2(\Surf_\frc)}
    \leq&\, 
    \left( \frac1{2\delta_3}
    \norm{\Scer_\psi^{k-1}}_{L^2(\Surf_\frc)}^2
    +\frac{\delta_3}2\norm{\Scer_s^k}_{L^2(\Surf_\frc)}^2
    \right)
\end{align}
and
\begin{equation}
    J_3
    \leq \Dt C'_I\left(\frac1{2\delta_8}
    \norm{\Veer_\vev^{k-1}}_{\spH}^2
    +\frac{\delta_8}2\norm{\Veer_\vev^k}_{\spH}^2
    \right).
    \label{eq:err J3}
\end{equation}
We also subtract (\ref{eq:disc split psi}) at step $k-1$ from
  the corresponding equation at step $k$, let
$\varphi=\Scer_\psi^{k}$, and obtain the estimate
\begin{equation}
    \begin{split}
    &
    \frac1{\Dt}
    \norm{\Scer_\psi^{k}}_{L^2(\Surf_\frc)}^2
    =\big(\Scer_{\opG}^k \,,\, \Scer_\psi^{k} \big)_{L^2(\Surf_\frc)}
    \leq
    \int_{\Surf_\frc}\big(C_{\opG,s}^\star|\Scer_s^{k}|
    -C_{\opG,\psi}|\Scer_\psi^{k}| \big)|\Scer_\psi^{k}|\dd\Surf
    \\& \hspace{1cm}
    \leq \frac{C_{\opG,s}^\star}2\left(
    \frac1{\delta_5}\norm{\Scer_\psi^{k}}_{L^2(\Surf_\frc)}^2
    +\delta_5\norm{\Scer_s^k}_{L^2(\Surf_\frc)}^2
    \right)
    -C_{\opG,\psi}\norm{\Scer_\psi^k}_{L^2(\Surf_\frc)}^2.
    \end{split}
    \label{eq:err disc ODE}
\end{equation}
We use the relation $\Veer_\veu^k=\Dt\Veer_\vev^k$, and combine
(\ref{eq:err v1})-(\ref{eq:err disc ODE}) to obtain
\begin{equation}
    \begin{aligned}
        &
    \left(\frac{C_{\rho^0}}\Dt - \Dt C_{a_3}'
    -\frac{\Dt C_S C_{\rho^0}^\star\delta_6}{8\,\pi \Grav}
    \right)
    \norm{\Veer_\vev^k}_{L^2(\Domn)}^2
    +\Dt C_{a_3}
    \norm{\Veer_\vev^k}_{\spH}^2
    \\[1mm] & \hspace{0.5cm}
    +\left(
    \frac\gamma{\Dt^2}
    -\frac{\delta_7\, C_{\opF,\sigma}^\star (C_I+C'_I)}{2\Dt}
    -\frac{\delta_8 \,C'_I}{2\Dt} \right)
    \norm{\Veer_\veu^k}_{\spH}^2
    \\[1mm] & \hspace{0.5cm}
    +\left(\frac1{\Dt}
    -\frac{C_{\opG,s}^\star}{2\delta_5}
    +C_{\opG,\psi}\right)
    \norm{\Scer_\psi^k}_{L^2(\Surf_\frc)}^2
    + \left(C_\veps C_{\opF,s}
    -\frac{\delta_3\,C_{\opF,\psi}^\star}2
    -\frac{\delta_5\,C_{\opG,s}^\star}2 \right)
    \norm{\Scer_s^k}_{L^2(\Surf_\frc)}^2
    \\[1mm] & \hspace{0.0cm}
    \leq
    \frac{\Dt\,C_S C_{\rho^0}^\star}{8\pi \Grav\,\delta_6}
    \norm{\Veer_\vev^{k-1}}_{L^2(\Domn)}^2
    +\left(\frac{C_{\opF,\sigma}^\star (C_I+C'_I) }
    {2\Dt \delta_7}
    + \frac{C'_I }{2\Dt \delta_8}\right)
    \norm{\Veer_\veu^{k-1}}_{\spH}^2
    +\frac{C_{\opF,\psi}^\star}{2\delta_3}
    \norm{\Scer_\psi^{k-1}}_{L^2(\Surf_\frc)}^2 .
    \end{aligned}
    \label{eq:disc err est}
\end{equation}
With $C_{\opF,\psi}^\star\delta_3=C_{\opG,s}^\star\delta_5=C_\veps
C_{\opF,s}$, $\delta_6=\lambda^{-1/2}$ and
$C_{\opF,\sigma}^\star(C_I+C'_I)\delta_7=C'_I\delta_8=\gamma/\Dt$,
(\ref{eq:disc err est}) becomes
\begin{equation}
    \begin{aligned}
        &
    \left(\frac{C_{\rho^0}}\Dt
    -\frac{\Dt C_S C_{\rho^0}^\star}{8\,\pi \Grav\sqrt{\lambda}}
    \right)
    \norm{\Veer_\vev^k}_{L^2(\Domn)}^2
    +\frac1\Dt C_{a_3}\norm{\Veer_\veu^k}_{\spH}^2
    \\[1mm] & \hspace{1.0cm}
    +\left(\frac1{\Dt}
    -\frac{C_{\opG,s}^{\star\,2}}{2C_\veps C_{\opF,s}}
    +C_{\opG,\psi}\right)
    \norm{\Scer_\psi^k}_{L^2(\Surf_\frc)}^2
    \\[1mm] & \hspace{0.0cm}
    \leq
    \frac{\Dt\,C_S C_{\rho^0}^\star\sqrt{\lambda}}{8\pi \Grav}
    \norm{\Veer_\vev^{k-1}}_{L^2(\Domn)}^2
    +\frac1{2\gamma}\big(C_{\opF,\sigma}^{\star\,2} (C_I+C'_I)^2
    + C^{'\,2}_I \big)
    \norm{\Veer_\veu^{k-1}}_{\spH}^2
    \\[1mm] & \hspace{1.0cm}
    +\frac{C_{\opF,\psi}^{\star\,2}}{2C_\veps C_{\opF,s}}
    \norm{\Scer_\psi^{k-1}}_{L^2(\Surf_\frc)}^2.
    \end{aligned}
    \label{eq:disc err est 1}
\end{equation}
Clearly, the contraction \eqref{eq:contraction 2} is a direct
consequence of (\ref{eq:disc err est 1}) if the criteria in
(\ref{eq:disc contr cond}) are satisfied and $\Dt$ is sufficiently
small. We therefore obtain a unique fixed point
$(\hat\vev^\br{n},\hat\veu^\br{n},\hat\psi^\br{n})^T \in \hat
V_1\times \hat V_1\times \hat V_2$ that solves the time-discretized
problem.
\end{proof}

\begin{remark}
    Theorem \ref{th2} indicates that $\gamma$ can be chosen
    proportional to $\Dt$ to ensure that the general time-discretized
    coupling problem converges to a unique solution. This result is
    consistent with Theorem \ref{th1}. Moreover, the choice of the
    ratio $\gamma/\Dt$ involves the constant
    $C^{*\,2}_{\mathcal{F},\sigma}(C_I+C_I')^2+C_I^{'\,2}$, which is
    related to the smoothness of rupture surface $\Surf_\frc$ as well
    as to the instantaneous friction coefficient.  In some numerical
    tests for ruptures with simple geometry, this constant is small,
    and the criteria for artificial viscosity can be fulfilled by the
    dissipative nature of numerical schemes.  However, as the
    numerical experiments in our companion publication show
    \cite{Ye2018ruptnum}, when the rupture surfaces are nonplanar or
    the elastic material is distinct across the rupture, a
    sufficiently large positive artificial viscosity is necessary to
    guarantee the scheme's convergence.
    \label{rmk th2}
\end{remark}

\section{Discussion}
\label{sec:conclusion}

We establish a mathematical understanding of coupling spontaneous
rupturing and seismic wave generation in a self-gravitating earth by
developing an iterative scheme. We introduce an artificial viscosity
term as a regularization in the relevant elastic-gravitational system
of equations and show the contraction of the iterative scheme in
natural norms. Thus we obtain framework for studying earthquakes with
general rate- and state-dependent friction laws constrained by
observations from experiments. We also give precise conditions on the
viscosity coefficient and time step that guarantee the convergence of
the iterative scheme.

Our iterative coupling scheme provides a natural multi-rate time
stepping strategy for dealing with the nonlinearity of the ordinary
differential equation for state evolution. This evolution requires a
significantly finer time step than the seismic wave propagation and
scattering. We also provide an analysis for the discrete time problem.
Our analysis is carried out with a uniform time step, but the
extension to the multirate case can be made, which is illustrated in a
companion paper \cite{Ye2018ruptnum}.

\section*{Acknowledgement}

R. Ye acknowledges the support from the Simons Foundation under the
MATH $+$ X program and by the members of the Geo-Mathematical Imaging
Group at Rice University. K. Kumar acknowledges Toppforsk, Norwegian
Research Council project 250223. M.V. de Hoop acknowledges the support
from the Simons Foundation under the MATH $+$ X program, the National
Science Foundation under grant DMS-1559587 and the members of the
Geo-Mathematical Group at Rice University.  All the authors
acknowledge Dr. A. Mazzucato at Penn State University for invaluable
suggestions in the preparation of this manuscript.

\bibliographystyle{amsplain}
\bibliography{Rupture} 

\end{document}